\documentclass{article}
\usepackage[utf8]{inputenc}
\usepackage{amsmath,amssymb,amsfonts,amsthm}
\usepackage[english]{babel}
\usepackage{parskip}
\usepackage{hyperref}
\usepackage{tikz}
\usepackage{graphicx}
\usepackage{float}
\usepackage{subcaption}
\usepackage{fancyhdr} 
\usepackage{geometry}
\usepackage{setspace}
\usepackage{csquotes}
\usepackage{comment}
\usepackage{authblk}
\usepackage{framed}
\usepackage[onelanguage, ruled, linesnumbered]{algorithm2e}

\newtheorem{definition}{Definition}[section]
\newtheorem{lemma}{Lemma}[section]
\newtheorem{theorem}{Theorem}[section]

\newcommand{\keywords}[1]{%
\par\noindent\textbf{Keywords: }#1
}

\newcommand*{\N}{\mathbb{N}}

\newcommand*{\Z}{\mathbb{Z}}

\usepackage{hyperref}
\hypersetup{
  colorlinks=true,
  linkcolor=black,
  citecolor=black,
  filecolor=black,
  urlcolor=black,
}
\usepackage[shortlabels]{enumitem} 
\usepackage[capitalize]{cleveref} 

\usepackage{comment}

\usepackage{newpxtext}

\usepackage{microtype}

\usepackage[author=]{fixme}
\fxusetheme{color}

\usepackage{relsize}
\usepackage{tikz}
\usetikzlibrary {arrows.meta}
\newcounter{row}
\newcounter{col}
\newcounter{rows}
\newcounter{cols}

\newcommand\setrow[1]{
  \setcounter{col}{1}
  \foreach \n in {#1} {
    \edef\x{\value{col} - 0.5+\value{cols}}
    \edef\y{-\value{row} - 0.5 + \value{rows}}
    \node[anchor=center] at (\x, \y) {\n};
    \stepcounter{col}
  }
  \stepcounter{row} 
}

\title{Complexity of Fungal Automaton Prediction}
\author{E.~Formenti\thanks{Universit\'e C\^ote d'Azur, CNRS, I3S, Nice, France.}
,
E.~Goles\thanks{Facultad de Ingeniería y Ciencias, Universidad Adolfo Ibáñez, Santiago, Chile.}
,
K.~Perrot\thanks{Aix Marseille Universit\'e, CNRS, LIS, Marseille, France.}
,
M.~Ríos-Wilson\thanks{Facultad de Ingeniería y Ciencias, Universidad Adolfo Ibáñez, Santiago, Chile.}
~and
D.~Ruiz-Tala\thanks{Departamento de Ingeniería Matemática, Facultad de Ingeniería y
Ciencias, Universidad de Chile, Santiago, Chile. E-mail: domruiz123@gmail.com}}
\date{March 2026}

\begin{document}
\maketitle

\begin{abstract}
  Fungal automata are a nature-inspired computational model,
  where a rule is alternatively applied verticaly and horizontaly.
  In this work we study the computational complexity of predicting
  the dynamics of all fungal freezing totalistic one-dimentional rules of radius $1$,
  exhibiting various behaviors. Despite efficiently predictable in most cases
  (with non-deterministic logspace algorithms),
  a non-linear rule is left open to characterize.
  We further explore the freezing majority rule (which is totalistic), and prove that at radius $1.5$
  it becomes $\mathbf{P}$-complete to predict.
  \keywords{Cellular automata prediction \and totalistic rule \and fungal automata.}
\end{abstract}

\section{Introduction}

Recently, Goles et al.~\cite{goles2020computational} proposed a variant of sandpile automata inspired by another class of automata known as \emph{fungal automata}~\cite{adamatzky2023fungal}. Fungal automata are a nature-inspired computational model that mimics the behavior of septa in fungi. Septa are compartments located within the cytoplasm of fungal organisms that regulate the flow between adjacent cells through structures called Woronin bodies. These structures react to external conditions and can either close—preventing flow between septa—or open to allow communication.

In this paper we consider a general formulation of fungal automata as a model defined on the two-dimensional grid, where cells are updated according to either their horizontal or vertical neighbors following a fixed update pattern.

A natural starting point for the systematic study of fungal automata is the class of \emph{totalistic rules}, that is, rules in which the localupdate function depends only on the sum of the states in the neighborhood rather than on their orientation. These rules are particularly convenient because the dependence on the sum of neighbor states can be extended naturally to higher dimensions, similarly to how the sandpile model admits higher-dimensional generalizations, even though the one-dimensional version of that model is not totalistic.

Among totalistic rules, we focus on \emph{freezing one-dimensional rules}, which have been studied in the context of automata networks. In this setting, a rule is called freezing if once a cell reaches state $1$, it never returns to state $0$.

In this work we define a prediction problem for fungal automata and study its computational complexity for all freezing totalistic one-dimensional rules of radius $1$. Among these rules, a particularly interesting case is the \emph{freezing majority rule}. For this rule we analyze the prediction problem for the fungal automaton associated with the radius $1.5$ freezing majority rule, and prove that its prediction problem is $\mathbf{P}$-complete.

\section{Circuit Value Problem}\label{sec:CVP}
\begin{definition}
    For every $n \in \mathbb{N}$, a \emph{Boolean circuit} $C$ with $n$ inputs is a Directed Acyclic Graph (DAG). It contains $n$ nodes with no incoming edges, called \emph{input nodes}, and a single node with no outgoing edges, called the \emph{output node}.
All other nodes are called \emph{gates} and are labeled with one of the logical operations $\vee$, $\wedge$, or $\neg$ (OR, AND, and NOT).
The $\vee$ and $\wedge$ gates have \emph{fan-in} (number of entering edges) $2$, while the $\neg$ gates have \emph{fan-in} $1$. A circuit \emph{monotone} if it only contains $\vee$ and $\wedge$ gates.
\end{definition}

With some abuse of notation, a boolean circuit as defined above implements a function $C:\{0,1\}^n \to\{0,1\}$ in a natural way stated in the next paragraph.

Since the circuit graph is acyclic, we can associate an integer \emph{circuit depth} to each node, such that all incoming edges of a node originate from nodes of strictly smaller circuit depth.
Given an input string $x \in {0,1}^n$, a Boolean value is assigned to each input node. The value of every other node is then uniquely determined by processing the nodes in increasing order of circuit depth. For each node, we examine the values assigned to its predecessor nodes and apply the Boolean operation labeling the node. In this way, each node receives a value in ${0,1}$. The value assigned to the output node is the output of the circuit on input $x$, which we denote by $C(x)$.

A classical decision problem related to Boolean circuits is the \emph{Circuit Value Problem} (CVP), formally defined as follows.

\begin{framed}\label{cvp}
\noindent\textbf{Circuit Value Problem (CVP)}\\
\textbf{Input:} A Boolean circuit $C$ with $n$ inputs and an assignment $x \in \{0,1\}^n$.\\
\textbf{Question:} Does $C(x) = 1$?
\end{framed}

It is well known that the Circuit Value Problem is \textbf{P-complete} \cite{greenlaw1995limits}. Moreover, the restriction of the problem to monotone circuits is known to be as hard as the general case. In particular, the following restricted version is \textbf{P}-complete \cite{goldschlager1977monotone}.

\begin{framed}
\noindent\textbf{Monotone Circuit Value Problem (MCVP)}\\
\textbf{Input:} A Monotone Boolean circuit $C$ with $n$ inputs and an assignment $x \in \{0,1\}^n$.\\
\textbf{Question:} Does $C(x) = 1$?
\end{framed}

The reduction of the prediction problem for automata to the CVP, relies on \emph{gadgets}. Gadgets are configurations of our CA that abstractly compute the output of logical gates. Working with monotone circuits constitutes a significant simplification of the construction, since it requires the emulation of fewer types of gadgets. Indeed, instances of MCVP are simpler, as they involve one fewer type of gate. In particular, MCVP will be used in the proof of \Cref{teo2}.

\subsubsection{Circuit Encoding.}
Some care must be taken in specifying the format in which a circuit $C$ is described. This observation is useful when proving that a reduction from the CVP can be implemented by a log-space transducer. Following the construction in \cite{ruzzo1981uniform}, each gate of a circuit is represented by a $4$-tuple
\[(g, t, g_1, g_2),\]
where $g$ is the number of the gate, $t$ specifies the type of the gate, and $g_1$ and $g_2$ are the numbers of the gates producing the inputs to gate $g$. All numbers are encoded in binary. If gate $g$ has only one input then, by convention, we set $g_2 = g_1$.

Without loss of generality, the input gates are numbered from $1$ to $n$. Since their predecessors are never used, we assume $g_1 = g_2 = 0$ for these gates. All remaining gates are numbered consecutively starting from $n+1$, and the inputs of each gate $g$ must come from gates with strictly smaller numbers. The gate numbered $n+i$ will occasionally be denoted by $G_i$.

Following Ruzzo \cite{ruzzo1981uniform}, the description $B$ of a complete circuit is defined as the concatenation of the descriptions of all its gates, sorted by increasing gate numbers.

\section{Cellular Automata}\label{sec:ca}

\begin{definition}
    A $d$-dimensional \emph{cellular automaton} is a tuple $(Q,N,f)$ where $Q\subseteq \Z$ is a finite set called the \emph{alphabet}, $N$ is finite subset of $\Z^d$ called \emph{neighborhood} and $f:Q^N\mapsto Q$ is the \emph{local rule} of the cellular automaton. A \emph{configuration} is an assigment of states to each cell $c: \Z^d \mapsto Q$ is called . If $S\subseteq \Z^d$, we denote by $c|_{S}$ to restriction of the configuration $c$ to the set $S.$ The local function $f$ of a cellular automaton induces a global function $F:Q^{\Z^d} \mapsto  Q^{\Z^d}$  given by $F(c)(x) = f(c |_{x+N}),\, \forall x \in \Z^d, c \in Q^{\Z^d}$. 
\end{definition}

In our study we will mainly deal with the \emph{von Neumman} neighborhood, i.e., $N = \{(x_1,...,x_d) \in \mathbb{Z}^d \mid |x_1| + ... + |x_d| \leq 1\}$. 

\begin{definition}
A cellular automaton $(Q,N,f)$ is \emph{totalistic} if there exists a
function $h:\mathbb{N}\to Q$ such that for every local configuration
$c'\in Q^N$ we have
\[
f(c')=h\!\left(\sum_{u\in N}c'(u)\right).
\]
\end{definition}
In other words, a CA is totalistic if its local rule depends only on the sum of the states in the neighborhood.

\begin{definition}
A cellular automaton $(Q,N,f)$ is said to be \emph{freezing} if there exists a partial order $\leq$ on $Q$ such that for every configuration $c\in Q^{\mathbb{Z}^d}$ and every cell $x\in\mathbb{Z}^d$,
\[
c(x)\leq F(c)(x).
\]
\end{definition}

In our case of interest, the set of states is restricted to $Q = \{0,1\}$. Therefore, a CA is \emph{freezing} if any cell that reaches state $1$ can never return to state $0$.

\begin{definition}
Let $(Q,N,f)$ be a freezing cellular automaton, and let $F:Q^{\mathbb{Z}^d}\to Q^{\mathbb{Z}^d}$ be its global function. A finite set $S\subseteq \mathbb{Z}^d$ together with a function $a:S\to Q$ is called an \emph{alliance} if for every configuration $c\in Q^{\mathbb{Z}^d}$ satisfying $c|_{S}=a$, for every $t\in\mathbb{N}$ and for every $x\in S$, we have
\[ F^{t}(c)(x)=a(x). \]
\end{definition}

In other words, alliances on a configuration are a set of cells $S$ in which the states remain unchanged along all iterations, independently of the states of cella outside $S$. We then say that a configuration $c$ \emph{has an alliance} on $S\subseteq \mathbb{Z}^d$ if the pair $(S,c|_{S})$ is an alliance for the cellular automaton $(Q,N,f)$.

We introduce two important notions regarding configurations.

\begin{definition}\label{def:connectedcomp}
Let $c \in Q^{\mathbb{Z}^2}$ be a configuration and let $x \in \mathbb{Z}^2$ be a cell such that $c(x)=0$. The \emph{connected component} of $x$ is defined as the set of all cells $y \in \mathbb{Z}^2$ for which there exist $l \in \mathbb{N}$ and cells $x_1,\dots,x_l \in \mathbb{Z}^2$ such that
\[ x_1 = x, \quad x_l = y, \] and for every $i \in \{1,\dots,l\}$,
\[ c(x_i)=0 \quad \text{and} \quad x_{i-1} \text{ is adjacent to } x_i. \]
\end{definition}

An example of a connected componnent is on \Cref{fig:compocon}. In what follows, we define the notion of \emph{staircase}, a set of cells that arises naturally in the analysis of radius-$1$ fungal automata.

\begin{definition}
A \emph{staircase} is a sequence of cells $x_0...x_n$ such that there exists $\lambda_1,\lambda_2$ such that either (1) or (2) occurs: \begin{align*}
    && \text{(1) } & \forall i \in [n], x_i = x_0 + \lambda_1\lceil\frac{i}{2}\rceil (1,0) + \lambda_2\lfloor\frac{i}{2}\rfloor (0,1)\\
    && \text{(2) } & \forall i \in [n], x_i = x_0 + \lambda_1\lfloor\frac{i}{2}\rfloor (1,0) + \lambda_2\lceil\frac{i}{2}\rceil (0,1)
\end{align*} All cells in the staircase have two adjacent cells belonging to the staircase,  except for two cells, which we call \emph{endpoints}. Example of a staircase is in \Cref{fig:staircase}.
\end{definition}

\begin{figure}[t]
\centering\resizebox{0.2\textwidth}{!}{
\begin{tikzpicture}[
  every node/.append style={font=\relsize{+3}}
]

\draw[draw opacity=0,
      fill={rgb,255:red,184; green,233; blue,134},
      fill opacity=1]
      (1,3) rectangle (4,4);

\draw[draw opacity=0,
      fill={rgb,255:red,184; green,233; blue,134},
      fill opacity=1]
      (1,1) rectangle (3,3);

    \draw[line width=0.1mm] (0, 0) grid (5, 5);
    \setcounter{cols}{0}
    \setcounter{rows}{5}
    \setcounter{row}{0}
    \setrow {1,1,1,1,1}
    \setrow {1,0,0,0,1}
    \setrow {1,0,$x$,1,1}
    \setrow {1,0,0,1,1}
    \setrow {1,1,1,1,1};

\end{tikzpicture}}
\caption{In green, the connected component of cell $x$.}
\label{fig:compocon}
\end{figure}

\begin{figure}[t]
\centering
\resizebox{0.2\textwidth}{!}{\begin{tikzpicture}[
  every node/.append style={font=\relsize{+3}}]


    
    \draw[draw opacity=0, fill={rgb,255:red,184; green,233; blue,134}, fill opacity=0.43] (3,2) -- (5,2) -- (5,3) -- (3,3) -- cycle;
    \draw[draw opacity=0, fill={rgb,255:red,184; green,233; blue,134}, fill opacity=0.43] (4,3) -- (6,3) -- (6,4) -- (4,4) -- cycle;
    \draw[draw opacity=0, fill={rgb,255:red,184; green,233; blue,134}, fill opacity=0.43] (5,4) -- (7,4) -- (7,5) -- (5,5) -- cycle;

    \draw[draw opacity=0, fill={rgb,255:red,184; green,233; blue,134}, fill opacity=0.43] (6,5) -- (8,5) -- (8,6) -- (6,6) -- cycle;

    \draw[draw opacity=0, fill={rgb,255:red,184; green,233; blue,233}, fill opacity=0.83] (7,6) -- (8,6) -- (8,7) -- (7,7) -- cycle;
    \draw[line width=0.1mm] (2, 1) grid (9, 8);

\end{tikzpicture}}
\caption{The set of cells forming a staircase is shown in green. The size of the staircase is not fixed. All cells in the staircase have two adjacent cells in the staircase, except for the endpoints. The endpoints do not have a fixed orientation; therefore, the extra cell shown in blue may or may not belong to the staircase.}
\label{fig:staircase}
\end{figure}
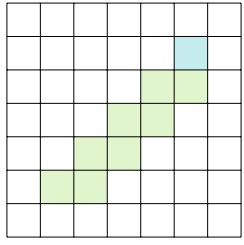

\section{Automata Networks}

Let $[n] = \{1,...,n\}$. We define the concepts necessary to discuss one of the prediction problems that remained open in \Cref{sec:f2}.

\subsubsection{Tree Width}
\begin{definition}
Given a graph $G=(V,E)$, a \emph{tree decomposition} is a pair
$\mathcal{D}=(T,\Lambda)$ such that $T$ is a tree graph and $\Lambda$ is a family
of subsets of nodes $\Lambda=\{X_t \subseteq V \mid t\in V(T)\}$, called
\emph{bags}, such that:
\begin{itemize}
    \item Every node in $G$ is in some $X_t$, i.e.,
    \[
    \bigcup_{t\in V(T)} X_t = V.
    \]
    \item For every $e=uv\in E$ there exists $t\in V(T)$ such that
    $u,v\in X_t$.
    \item For every $u\in V$ and every $t,v\in V(T)$, if $w\in V(T)$ is on the
    $t$--$v$ paht in $T$, then
    \[
    X_t \cap X_v \subseteq X_w.
    \]
\end{itemize} 

We define the width of a tree decomposition $\mathcal{D}$ as
\[ \mathrm{width}(\mathcal{D})=\max_{t\in V(T)} |X_t| - 1. \]
\end{definition}

\begin{definition}
Given a graph $G=(V,E)$, we define its \emph{treewidth} as the parameter \[ \mathrm{tw}(G)=\min_{\mathcal{D}} \mathrm{width}(\mathcal{D}). \] In other words, the treewidth is the minimum width of a tree decomposition of $G$. Note that if $G$ is a connected graph such that $|E(G)| \ge 2$, then $G$ is a tree if and only if $\mathrm{tw}(G)=1$.
\end{definition}

\subsubsection{Automata networks}
Automata networks (AN) are a generalization of cellular automata in which the underlying grid is replaced by a graph, and the update function depends on the neighbors of each node in the graph. Formally:

\begin{definition}
We define a automata network over the alphabet $Q\subseteq\Z$ and a graph $G = (V = [n],E \subseteq\binom{[n]}{2})$ as a tuple $(G,\mathcal{F}=\{F_v:Q^{|N(v)|}\to Q\,|\,v\in V\})$. As for CA, we call configurations elements of $Q^n$ and we define the global transition $F:Q^n\to Q^n$ defined by
\[ F(x)=y\in Q^n \text{ with } y_v = F_v(x),\ \forall v. \]
\end{definition}

Analogous to the definitions of freezing CA we can define the notion of freezing AN.
\begin{definition}
We say that a AN $(G,\mathcal{F})$ with global function $F$ defined over the alphabet $Q$ is \emph{freezing}, if there exists a partial order $\leq$ in $Q$ such that for every configuration $c\in Q^n$ and $t\in\mathbb{N}$ we have
\[ c(i) \le F(c)(i) \quad\text{for every } i \in [n] \]
\end{definition}

Following the ideas from \cite{paramcom}, \emph{AN prediction} problem is formally defined as follows:

\begin{framed}\label{ANpred}
\noindent\textbf{AN prediction}\\
\textbf{Input:} An AN $(G,\mathcal{F})$, the initial configuration $c$ and a node $v  \in V$ (encoded in binary).\\
\textbf{Question:} Suppose the initial configuration of the FA $c$. Will cell $x$ assume a non-zero state at some configuration $F_Z(c,t) \text{ with }t \in \N$?
\end{framed}

In \cite{paramcom}, it is shown that:
\begin{theorem}\label{teof2}
The \emph{AN prediction} problem lies in \textbf{NC} for freezing AN over graphs with bounded treewidth and degree.
\end{theorem}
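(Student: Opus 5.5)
The plan follows the approach of \cite{paramcom}. The guiding idea is that in a freezing AN every node changes state at most $|Q|-1$ times along its chain in the partial order. With $|Q|$ fixed, the whole dynamics therefore stabilizes within $O(n)$ steps. Predicting node $v$ then reduces to deciding whether $v$ ever leaves state $0$ before stabilization. Instead of simulating step by step, which is inherently sequential, we describe each node's trajectory by a bounded amount of data. A node's trajectory is determined by the sequence of states it visits together with the (up to $|Q|-1$) times at which it changes. Each such time lies in $\{0,\dots,O(n)\}$, so a trajectory can be written with $O(\log n)$ bits. With bounded degree $\Delta$, the local consistency of a node's trajectory with its neighbors' trajectories can be checked locally: one verifies $F_u$ at each breakpoint of the trajectories in the closed neighborhood $N[u]$, which gives $O(\Delta |Q|)$ checks.

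First I would compute a tree decomposition of bounded width $k$, using a logspace/NC algorithm for bounded treewidth (Bodlaender--Hagerup parallel construction, or Elberfeld--Jakoby--Tantau). I would then balance it to depth $O(\log n)$ while keeping width $O(k)$. Next I would modify it so that every closed neighborhood $N[u]$ is contained in some bag. This is possible because bounded degree only multiplies the width by a factor depending on $\Delta$: replace each bag $X_t$ by $\bigcup_{u\in X_t} N[u]$, which preserves the decomposition properties and gives width $(k+1)(\Delta+1)$.

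Second, I would set up a dynamic program over the balanced decomposition. For each tree node $t$, the table records which assignments of trajectories to the nodes of $X_t$ are consistent with some globally consistent assignment of trajectories to the subtree below $t$. Each such assignment has $O(\log n)$ bits, so each table has polynomially many entries. Tables are combined bottom-up, one level at a time. Combining two children is a polynomial-size Boolean join, which takes $O(\log n)$ depth per level, giving $\mathbf{NC}^2$ overall. Since the dynamics is deterministic, there is exactly one globally consistent trajectory assignment: the true one. The answer is yes iff the root-reachable consistent assignment gives $v$ a trajectory that leaves $0$.

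The main obstacle is justifying that local consistency of trajectories implies global correctness, i.e., that the unique locally consistent family equals the actual orbit. I would prove this by induction on time. Suppose all trajectories agree with the real orbit up to time $t$. Then the consistency check of $F_u$ on $N[u]$ at time $t$ forces agreement at time $t+1$. The subtle point is that the checks must cover every time step, not only the breakpoints. Trajectories are piecewise constant, so it suffices to check $F_u$ on each maximal interval where all of $N[u]$ is constant; there are at most $O(\Delta|Q|)$ such intervals, which keeps the check local and bounded.
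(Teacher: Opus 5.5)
Your proposal is correct and is essentially the argument behind the result of \cite{paramcom}; the paper itself gives no proof and only cites that work. The shared ingredients are: encode each freezing trajectory in $O(\log n)$ bits, check a family of trajectories locally on closed neighborhoods, and run a parallel dynamic program over a balanced bounded-width tree decomposition whose bags are enlarged by neighborhoods, which is where bounded degree is used.

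Two small points should be made explicit. First, add the unary constraint that each trajectory starts at $c(u)$; without it the consistent family is not unique. Second, since $v$ need not lie in the root bag, read off the answer by restricting $v$'s domain to trajectories that reach a non-zero state and testing whether the root table is non-empty.
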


\section{Fungal Automata}\label{secfunggen}
Let $\mathcal{A}=(Q,N,f)$ be a one-dimensional (1d) cellular automata with state set $Q\subseteq\mathbb{N}$ and neighborhood  $N = \{n_1,\dots,n_r\}\subset \mathbb{Z}$. Let $\Phi:\mathbb{N}\to\{H,V\}$ be a periodic function with period $k$, which we identify with a word $Z = Z_0...Z_{k-1}$ over the alphabet $\{H,V\}$ such that $\Phi(i) = Z_{i \bmod k},\forall i \in \mathbb{N}$.
Define the two-dimensional neighborhoods:
\[
\mathcal{N}_H := \{(n_i,0)\mid i\in[k]\}
\quad\text{and}\quad
\mathcal{N}_V := \{(0,n_i)\mid i\in[k]\}.
\]

Let $f_H:Q^{\mathcal{N}_H}\to Q$ be the local rule that applies $f$ according to the neighborhood $\mathcal{N}_H$, and let $f_V:Q^{\mathcal{N}_V}\to Q$ be the local rule that applies $f$ according to the neighborhood $\mathcal{N}_V$. 
That is, for any $c'\in Q^{\mathbb{Z}^2}$:
\begin{align*}
f_H(g) &= f\big(c'(n_1,0),\dots,c'(n_r,0)\big),\\
f_V(g) &= f\big(c'(0,n_1),\dots,c'(0,n_r)\big).
\end{align*}

The \emph{fungal automaton} (FA) associated with $\mathcal{A}$ is the tuple
$
(f,Q,M,Z,f_H,f_V),
$
where $M=\mathcal{N}_h\cup\mathcal{N}_v$.
Each local function $f_D$ with $D\in \{H,V\}$ induces a global function 
$F_D:Q^{\mathbb{Z}^2}\to Q^{\mathbb{Z}^2}$. Like a classical CA, in their turn these global functions allow to define the global transition function 
$F_Z:Q^{\mathbb{Z}^2}\to Q^{\mathbb{Z}^2}$. That is recursively defined as
$F^0(c) = c$, and 
$F^{t+1}(c) = F_{\Phi(t+1)}\big(F^t(c)\big)$
for all $c\in Q^{\mathbb{Z}^2}$ and $t\in\mathbb{N}$. As for classical CA, elements $c\in Q^{\mathbb{Z}^2}$ are called \emph{configurations}.


Since FA are dynamical systems, it makes sense to define a prediction problem that we call the \emph{FA prediction problem} (FA-pred), formally defined as follows:

\begin{framed}\label{FApred}
\noindent\textbf{FA prediction problem (FA-pred)}\\
\textbf{Parameters:} A 1d CA $(f,Q,M)$, a word $Z \in \{H,V\}^+$, a value $s\in Q$.\\
\textbf{Input:} A rectangle $R\subseteq \mathbb{Z}^2$, a configuration $c|_R$ and the index $x$ of a cell on $R$ (encoded in binary).\\
\textbf{Question:} Will $x$ assume a state different from $c(x)$ when the FA $(f,Q,M,Z,f_H,f_V)$ is initialized with $c|_R$ surrounded by cells in state $s$?
\end{framed}

We will refer to $s$ as the \emph{background state}, and to $x$ as the \emph{target cell}.

\section{Fungal Automata from totalistic freezing 1d rules} 

In this section, we study FA by focusing on those defined by one-dimensional \emph{totalistic freezing} rules of radius 1.
For each one-dimensional local rule $f$ with state set $Q$, and for a word $Z \in \{H,V\}^+$ with $Z = Z_0...Z_k$, we recall from \Cref{secfunggen} that the associated fungal automata (FA) is the tuple $(f,Q,M,Z,f_H,f_V)$, where the global function $F_Z$ defines the dynamics of the FA alternating $f_H$ and $f_V$. Recall from the definition that $f_H$ and $f_V$ update cells according to the rule $f$, considering horizontal and vertical neighbors (See \Cref{fig:ejemplofungal}), respectively. In this section, we restrict our attention to fungal automata for which $Z = HV$. We will denote the corresponding global rule by $F_{HV}$ in the sequel, without explicitly recalling the underlying one-dimensional rule, in order to avoid an excessive use of notation.

\begin{figure}[t]
\centering
\resizebox{\textwidth}{!}{\begin{tikzpicture}[
  every node/.append style={font=\relsize{+3}}
]

\draw[draw opacity=0,
      fill={rgb,255:red,184; green,233; blue,134},
      fill opacity=0.43]
      (1,2) rectangle (4,3);

\draw[draw opacity=0,
      fill={rgb,255:red,184; green,233; blue,134},
      fill opacity=1]
      (2,2) rectangle (3,3);

\draw[draw opacity=0,
      fill={rgb,255:red,184; green,233; blue,134},
      fill opacity=0.43]
      (9,2) rectangle (12,3);

\draw[draw opacity=0,
      fill={rgb,255:red,184; green,233; blue,134},
      fill opacity=1]
      (10,2) rectangle (11,3);
\draw[draw opacity=0,
      fill={rgb,255:red,184; green,233; blue,134},
      fill opacity=0.43]
      (18,1) rectangle (19,4);

\draw[draw opacity=0,
      fill={rgb,255:red,184; green,233; blue,134},
      fill opacity=1]
      (18,2) rectangle (19,3);

\draw[draw opacity=0,
      fill={rgb,255:red,184; green,233; blue,134},
      fill opacity=0.43]
      (26,1) rectangle (27,4);

\draw[draw opacity=0,
      fill={rgb,255:red,184; green,233; blue,134},
      fill opacity=1]
      (26,2) rectangle (27,3);

\draw[-{Stealth[length=2mm]}, line width=0.4mm]
      (5.5,2.5) -- (7.5,2.5);

\draw[-{Stealth[length=2mm]}, line width=0.4mm]
      (21.5,2.5) -- (23.5,2.5);

    \draw[line width=0.1mm] (0, 0) grid (5, 5);

    \setcounter{rows}{5}
    \setcounter{row}{0}
    \setrow { , , , , }
    \setrow { , ,1, , }
    \setrow { ,0,0,1, }
    \setrow { , ,1, , }
    \setrow { , , , , };
    
    \draw[line width=0.1mm] (8, 0) grid (13, 5);

    \setcounter{cols}{8}
    \setcounter{rows}{5}
    \setcounter{row}{0}
    \setcounter{rows}{5}
    \setcounter{row}{0}
    \setrow { , , , , }
    \setrow { , ,1, , }
    \setrow { ,0,0,1, }
    \setrow { , ,1, , }
    \setrow { , , , , };
    
    \draw[line width=0.1mm] (16, 0) grid (21, 5);

    \setcounter{cols}{16}
    \setcounter{rows}{5}
    \setcounter{row}{0}
    \setcounter{rows}{5}
    \setcounter{row}{0}
    \setrow { , , , , }
    \setrow { , ,1, , }
    \setrow { ,0,0,1, }
    \setrow { , ,1, , }
    \setrow { , , , , };

    \draw[line width=0.1mm] (24, 0) grid (29, 5);

    \setcounter{cols}{24}
    \setcounter{rows}{5}
    \setcounter{row}{0}
    \setcounter{rows}{5}
    \setcounter{row}{0}
    \setrow { , , , , }
    \setrow { , ,1, , }
    \setrow { ,0,1,1, }
    \setrow { , ,1, , }
    \setrow { , , , , };

\setcounter{cols}{0}

\node at (2.5,-1) {$c$};
\node at (10.5,-1) {$F^1_{HV}(c)$};
\node at (18.5,-1) {$F^1_{HV}(c)$};
\node at (26.5,-1) {$F^2_{HV}(c)$};

\end{tikzpicture}}
\caption{Let $c$ be the configuration shown in the figure, where unlabeled cells are assumed to be in state $0$. Consider the one-dimensional cellular automaton $(f,\{-1,0,1\},\{0,1\})$, where $f(1,0,1)=1, \text{ and } f(a,b,c)=0 \text{ for all other values $a,b,c \in \{0,1\}$}$, and its associated fungal automaton $(f,Q,M,HV,f_H,f_V)$. The remaining grids depict successive updates of $c$ under the global function $F_{HV}$. Note that $F^1_{HV}(c)=F_H(c)$ and $F^2_{HV}(c)=F_V\left(F^1_{HV}(c)\right)$. All cells are updated simultaneously, but the central cell is highlighted in order to illustrate the behavior of the fungal automaton. No cell transitions to state $1$ during the update $F_H$. During the update $F_V$, only the central cell changes its state. We therefore focus on the evolution of the central cell. More precisely, in the leftmost grid the cell highlighted in dark green is updated according to the local rule $f$ using its horizontal neighborhood (cells highlighted in light green). Since one of its neighbors has value $0$, the cell remains in state $0$. In the rightmost grid, the same cell is updated according to $f$ with respect to its vertical neighbors, again highlighted in light green. Two identical grids are shown in order to emphasize the neighborhood on which the local rule $f$ is applied during each update to the central cell.}
\label{fig:ejemplofungal}
\end{figure}
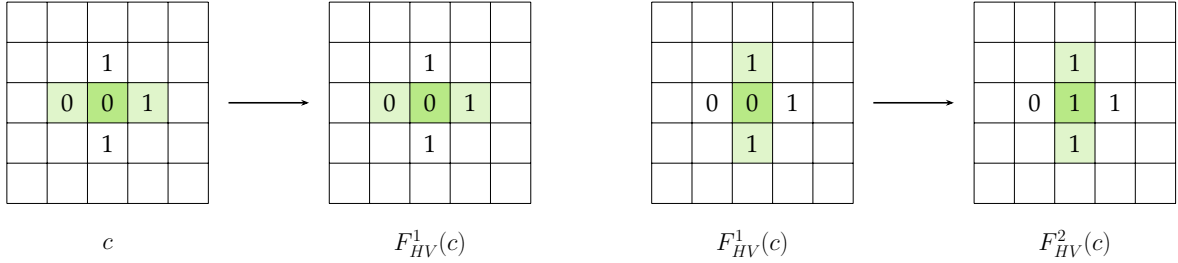

\subsection{Fungal automata assosiated with Totalistic Freezing CA of radius 1}

Fungal automata based on a totalistic freezing CA of radius 1, are such that their local rule $f$ is such that there exists a function $h:\{0,...,3\} \to \{0,1\}$ satisfying:

\begin{equation*}\label{eq:fteca}
f(x_1,x_2,x_3) =
\begin{cases}
    1 & \text{ if } x_2 = 1, \\
    h(x_1+x_3) & \text{ otherwise}.
\end{cases}
\end{equation*}

Therefore, the function $f$ is fully determined by the values of $h:\{0,...,3\}\to \{0,1\}$. Hence, a rule $f$ can be characterized by a triple $(h(0),h(1),h(2))$ of the function $h$ that defines $f$. We conclude that there are only 8 rules to study. They are shown in \Cref{tab:rules}. We study all the FA $\mathcal{F}_i = (f_i,\{0,1\},M,HV,F_{HV}), i \in \{0,...,8\}$ associated to each $f_i$, where $M$ denotes the von Neumann neighborhood. In the third column of \Cref{tab:rules}, we provide the complexity of the problem FA-pred for each $\mathcal{F}_i, \, i \in \{0,...,8\}$ for any background state $s \in \{0,1\}$. The corresponding proofs are given in the following sections.

\begin{table}[t]
\caption{Fungal automata studied in this section. The first two columns specify the one-dimensional rules under consideration, according to the triple defining the totalistic function $h$. The third column provides the complexity of the prediction problem.}
\centering
\begin{tabular}{|c|c|c|}
\hline
Rule & $h$ vector & FA-pred Complexity \\ \hline
$f_0$ & $(0,0,0)$ &  $\mathbf{AC}^0$ (\Cref{lemf07})\\ \hline
$f_1$ & $(0,0,1)$ &  \textbf{NL} (\Cref{lem:f1})\\ \hline
$f_2$ & $(0,1,0)$ &  unknown \\ \hline
$f_3$ & $(0,1,1)$ &  $\mathbf{AC}^0$ (\Cref{lemf3})\\ \hline
$f_4$ & $(1,0,0)$ &  $\mathbf{AC}^0$ (\Cref{lemf4})\\ \hline
$f_5$ & $(1,0,1)$ &  $\mathbf{AC}^0$ (\Cref{lemf5})\\ \hline
$f_6$ & $(1,1,0)$ &  $\mathbf{AC}^0$ (\Cref{lemf6})\\ \hline
$f_7$ & $(1,1,1)$ &  $\mathbf{AC}^0$ (\Cref{lem:f1})\\ \hline
\end{tabular}

\label{tab:rules}
\end{table}


\subsection{Computational Complexity}
We begin by studying the FA-pred from rules $f_i,\, i \in \{0,...,8\}$ and $Z = HV$. We carry out this analysis independently of the chosen background state, since the properties we expose characterize the dynamics regardless of this value.

\begin{lemma}\label{lemf07}
  For the rules $f_0$ and $f_7$ FA-pred $\in \mathbf{AC}^0$.
\end{lemma}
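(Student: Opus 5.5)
For $f_0$, with $h=(0,0,0)$, the local rule is $f_0(x_1,x_2,x_3)=x_2$. It is therefore the identity, whichever of $F_H$ or $F_V$ is applied. Hence $F^t_{HV}(c)=c$ for every $t$, and no cell ever changes state. The answer to FA-pred is always ``no'', and a constant circuit decides it, which lies in $\mathbf{AC}^0$.

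For $f_7$, with $h=(1,1,1)$, a cell in state $0$ becomes $1$ at the next step unless both of its neighbors in the current direction sum to $0$ or more. Looking at the table once more, $h(0)=1$, so the sum is irrelevant. I read the rule as $f_7(x_1,x_2,x_3)=1$ for all inputs, since $x_2=1$ gives $1$ and otherwise $h(x_1+x_3)=1$. (The domain of $h$ is $\{0,\dots,3\}$, but only the sums $0,1,2$ occur.) So after the first step $F^1_{HV}=F_H$, every cell is in state $1$. The target cell $x$ changes state if and only if $c(x)=0$. The circuit reads the single input bit $c(x)$ and outputs its negation. Locating that bit from the binary index of $x$ inside the encoding of $c|_R$ is a multiplexer, i.e.\ an OR over positions $i$ of ``index equals $i$ and bit $i$ is $0$''. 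Equality of binary strings is a constant-depth, polynomial-size circuit, so the whole computation lies in $\mathbf{AC}^0$.

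The background state $s$ plays no role in either case, because both rules ignore the neighbors entirely. The only point needing care is the uniform access to $c(x)$ from its binary index. This is the standard constant-depth indexing construction, so I expect no real obstacle.
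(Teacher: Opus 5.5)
Your proposal is correct and follows the same argument as the paper: $f_0$ is the identity, so the answer is always ``no'', and $f_7$ is the constant-$1$ rule, so the target cell changes if and only if $c(x)=0$. You go slightly beyond the paper by spelling out the $\mathbf{AC}^0$ indexing of $c(x)$ from its binary address; you should delete the muddled first sentence of your $f_7$ paragraph (``unless both of its neighbors \ldots\ sum to $0$ or more''), which you immediately retract anyway.
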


\begin{proof}
  These rules are trivial. Rule $f_0$ corresponds to the identity rule, so the
  configuration never changes. Rule $f_7$ maps every cell in state $0$ to $1$ in
  a single update. In both cases, the state of the target cell can be determined
  by seeing the initial state of the objective cell.
\end{proof}

\begin{lemma}\label{lemf3}
  For the rule $f_3$ FA-pred $\in \mathbf{AC}^0$.
\end{lemma}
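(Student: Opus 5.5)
Rule $f_3$ has $h=(0,1,1)$: a $0$ cell becomes $1$ as soon as at least one of its two neighbours along the current direction is $1$. The plan is to show that under $F_{HV}$ this rule spreads $1$s along rows and columns at unit speed. Consequently, whether the target cell $x$ ever changes is decided by a bounded local pattern around $x$, plus the background state $s$.

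\textbf{Step 1 (one-directional spreading).} After one $H$ step, a cell becomes $1$ if it or a horizontal neighbour was $1$. After one $V$ step, the same holds with vertical neighbours. So each $HV$ round dilates the set of $1$s by the pattern $\{(0,0),(\pm1,0)\}$ followed by $\{(0,0),(0,\pm1)\}$. Starting from any $1$ cell $y$, after $t$ rounds the set of $1$s contains the box $y+[-t,t]^2$. Indeed, the composed dilation is the $3\times 3$ square, and the $t$-fold Minkowski sum of that square is $[-t,t]^2$.

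\textbf{Step 2 (characterisation).} Since the rule is freezing, $x$ changes if and only if $c(x)=0$ and some $1$ exists somewhere in the infinite configuration.
\begin{itemize}
\item If $s=1$, the background is nonempty. So whenever $c(x)=0$ the answer is yes: in fact $x$ is reached after at most as many rounds as its distance to the border of $R$.
\item If $s=0$, the answer is yes if and only if $c(x)=0$ and $c|_R$ contains at least one $1$. Conversely, if $c|_R$ is all $0$, the configuration is the all-$0$ fixed point (because $h(0)=0$), so nothing ever changes.
\end{itemize}

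\textbf{Step 3 (circuit).} The resulting predicate is ``$c(x)=0$ and $\bigvee_{y\in R} c(y)=1$'', or just ``$c(x)=0$'' when $s=1$. The one nontrivial ingredient is selecting $c(x)$ from the binary index $x$. This is the standard $\mathbf{AC}^0$ indexing operation: an OR over $y\in R$ of $[y=x]\wedge \neg c(y)$, where each equality test is an AND of bit comparisons. The unbounded OR over $R$ is also in $\mathbf{AC}^0$, so FA-pred for $f_3$ lies in $\mathbf{AC}^0$.

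The main thing to double-check is Step 1. The dilation argument relies on $h(1)=h(2)=1$ together with freezing, which ensures that a $0$ cell with a single $1$ neighbour in the active direction always turns to $1$. Nothing else in the argument is delicate.
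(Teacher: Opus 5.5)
Your proof is correct and follows the same route as the paper: under $F_{HV}$ the $1$s spread along rows and columns until they fill the grid, so the question reduces to a single unbounded OR over the rectangle. Your version is more careful than the paper's in two places: you add the test $c(x)=0$, which is needed because a target already in state $1$ never changes, and you treat the background case $s=1$ separately.
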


\begin{proof}
  This rule corresponds to the logical \emph{or} operation in one-dimension. Every cell in state $0$ in a configuration $c$ that has a horizontal neighbor in state $1$ takes value $1$ under $f_H(c)$; analogously, every cell with a vertical neighbor in state $1$ takes value $1$ under $f_V(c)$ (See \Cref{fig:ejemplofungalf3}). Hence, if a configuration contains at least one cell in state $1$, then every cell eventually reaches state $1$. Therefore, the prediction problem reduces to deciding whether there exists at least one cell in the initial rectangle with value $1$, which can be done with a unique OR having as inputs all the bits from the initial configuration.
\end{proof}

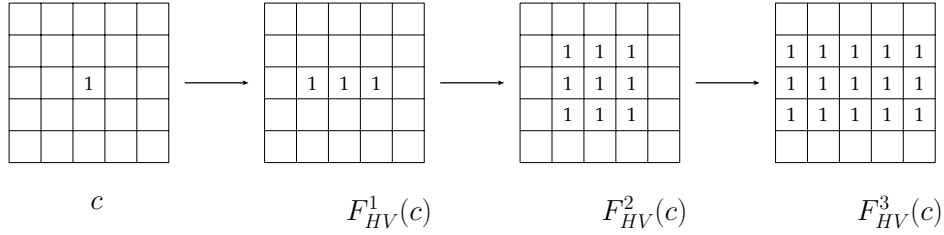
\begin{figure}[t]
\centering
\resizebox{0.8\textwidth}{!}{\begin{tikzpicture}[
  every node/.append style={font=\relsize{+3}}
]

\draw[-{Stealth[length=2mm]}, line width=0.4mm]
      (5.5,2.5) -- (7.5,2.5);
\draw[-{Stealth[length=2mm]}, line width=0.4mm]
      (13.5,2.5) -- (15.5,2.5);
      
\draw[-{Stealth[length=2mm]}, line width=0.4mm]
      (21.5,2.5) -- (23.5,2.5);

    \draw[line width=0.1mm] (0, 0) grid (5, 5);

    \setcounter{rows}{5}
    \setcounter{row}{0}
    \setrow { , , , , }
    \setrow { , , , , }
    \setrow { , ,1, , }
    \setrow { , , , , }
    \setrow { , , , , };
    
    \draw[line width=0.1mm] (8, 0) grid (13, 5);

    \setcounter{cols}{8}
    \setcounter{rows}{5}
    \setcounter{row}{0}
    \setcounter{rows}{5}
    \setcounter{row}{0}
    \setrow { , , , , }
    \setrow { , , , , }
    \setrow { ,1,1,1, }
    \setrow { , , , , }
    \setrow { , , , , };
    
    \draw[line width=0.1mm] (16, 0) grid (21, 5);

    \setcounter{cols}{16}
    \setcounter{rows}{5}
    \setcounter{row}{0}
    \setcounter{rows}{5}
    \setcounter{row}{0}
    \setrow { , , , , }
    \setrow { ,1,1,1, }
    \setrow { ,1,1,1, }
    \setrow { ,1,1,1, }
    \setrow { , , , , };

    \draw[line width=0.1mm] (24, 0) grid (29, 5);

    \setcounter{cols}{24}
    \setcounter{rows}{5}
    \setcounter{row}{0}
    \setcounter{rows}{5}
    \setcounter{row}{0}
    \setrow { , , , , }
    \setrow {1,1,1,1,1}
    \setrow {1,1,1,1,1}
    \setrow {1,1,1,1,1}
    \setrow { , , , , };

\setcounter{cols}{0}

\node at (2.5,-1) [anchor=north west,scale=1.5][inner sep=0.75pt] {$c$};
\node at (10.5,-1) [anchor=north west,scale=1.5][inner sep=0.75pt] {$F^1_{HV}(c)$};
\node at (18.5,-1) [anchor=north west,scale=1.5][inner sep=0.75pt] {$F^2_{HV}(c)$};
\node at (26.5,-1) [anchor=north west,scale=1.5][inner sep=0.75pt] {$F^3_{HV}(c)$};

\end{tikzpicture}}
\caption{Successive updates of the fungal automaton produced by the rule $f_3$.}
\label{fig:ejemplofungalf3}
\end{figure}

\begin{lemma}\label{lemf4}
  For the rule $f_4$ FA-pred $\mathbf{AC}^0$.
\end{lemma}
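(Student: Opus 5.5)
The plan is to show that the dynamics of $\mathcal{F}_4$ freeze after a constant number of steps. Then the answer only depends on a constant-size window of the initial configuration around the target cell $x$, and such a window can be read by a constant-depth circuit.

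Recall that $h=(1,0,0)$. A cell in state $0$ becomes $1$ under $f_H$ exactly when both of its horizontal neighbours are in state $0$, and likewise under $f_V$ with its vertical neighbours.

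\textbf{Key step: stabilization after one $H$ and one $V$ update.}
\begin{itemize}
\item Suppose a cell $y$ is still $0$ after the first $H$-update. Then, in the configuration before that update, some horizontal neighbour $y'$ of $y$ was in state $1$.
\item The rule is freezing, so $y'$ stays $1$ in all later configurations.
\item Hence at every later $H$-update $y$ has a horizontal neighbour equal to $1$, and $f_H$ never changes it again.
\item The same argument applies to the first $V$-update: every cell still $0$ after it has a vertical neighbour frozen at $1$, so later $V$-updates never change it.
\end{itemize}
Combining these, any cell that is $0$ in $F^2_{HV}(c)$ stays $0$ under every later update of either type. So $F^t_{HV}(c)=F^2_{HV}(c)$ for all $t\ge 2$. (Whether the schedule starts with $H$ or $V$ does not matter; in all cases we get stabilization within two or three steps.)

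\textbf{Locality.} It follows that $x$ changes state if and only if $F^2_{HV}(c)(x)\neq c(x)$. This value is determined by the initial states of the cells at $\ell_1$-distance at most $2$ from $x$, a window of constant size. Cells of the window lying outside $R$ take the background value $s$.

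\textbf{Circuit construction.}
\begin{itemize}
\item For each candidate position $p$ of $R$, a constant-depth comparator checks whether the binary index of $x$ equals $p$.
\item For that $p$, a constant-size formula evaluates the two-step rule on the window around $p$, inserting the constant $s$ for out-of-range neighbours. Whether a neighbour is out of range depends only on whether $p$ lies on the border of $R$, which is fixed per $p$.
\item The final output is the OR over all $p$ of ($x=p$ AND formula at $p$).
\end{itemize}
This circuit has polynomial size and constant depth, so FA-pred for $f_4$ is in $\mathbf{AC}^0$.

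The only step needing care is the stabilization argument, specifically that the witnessing neighbour in state $1$ persists. This is exactly where the freezing property of $f_4$ is used.
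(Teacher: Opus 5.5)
Your proposal is correct and follows essentially the same route as the paper. Freezing keeps the witnessing $1$-neighbour alive, so after one $H$-update and one $V$-update no $0$-cell can change again, and the answer depends only on a constant window of the initial configuration; the paper identifies this window as the Moore neighbourhood of $x$, which your $\ell_1$-ball of radius $2$ contains, and your explicit $\mathbf{AC}^0$ circuit (index comparator, local formula, big OR, background value at the border) spells out a step the paper leaves implicit.
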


\begin{proof}
  For this rule, in any configuration $c$ of the 2-dimensional grid, only cells whose two horizontal neighbors are both in state $0$ may switch from state $0$ to $1$ under the application of $f_H$. Consequently, in the configuration $F_H(c)$ there are no cells in state $0$ whose two horizontal neighbors are also in state $0$, since any such cell would have transitioned to state $1$. As no cell can change from $1$ back to $0$, no new cells with this property can appear.

  By an analogous argument, in the configuration $F^2_{HV}(c) = F_V(F_H(c))$ there are no cells in state $0$ whose two vertical neighbors are in state $0$. Hence, after two iterations the dynamics stabilizes, that is, $F^t_{HV}(c)=F^2_{HV}(c)$ for all $t\geq 2$.
  
  Therefore, the prediction problem only requires inspecting the behavior of the target cell $x$ during the first two iterations of $F_{HV}$, and then deciding whether its state differs from the initial one.
 
  Given the initial configuration $c$, we have $F^1_{HV}(c) = F_H(c)$. Thus, during the first iteration, the cell $x$ and its vertical neighbors are updated according to the local rule $f_4$, using their respective horizontal neighborhoods. Next, $F^2_{HV}(c) = F_V(F_H(c))$, so during the second iteration, the state of $x$ is updated according to the values of its vertical neighbors in the configuration $F_H(c)$. Consequently, predicting the behavior of $x$ after two iterations depends only on the Moore neighborhood of $x$ in the initial configuration. 
\end{proof}

\begin{lemma}\label{lemf5}
  For the rule $f_5$ FA-pred $\in \mathbf{AC}^0$.
\end{lemma}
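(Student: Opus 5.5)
The plan is to mimic the proof of \Cref{lemf4}: show that the dynamics of $\mathcal{F}_5$ under $F_{HV}$ reaches a fixed point after a constant number of iterations, independently of the configuration and of the background state. Then the final state of the target cell $x$ depends only on a window of constant radius around $x$ in the initial configuration. Such a window can be read by a constant-size circuit, with cells outside $R$ replaced by the constant $s$. This places FA-pred in $\mathbf{AC}^0$.

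The key local fact concerns rule $f_5$, given by $h=(1,0,1)$. A cell in state $0$ switches to $1$ exactly when the two neighbors in the active direction have \emph{equal} states (both $0$ or both $1$). It remains $0$ only when exactly one of them is $1$. I would first analyse $F^1_{HV}(c)=F_H(c)$. Every surviving $0$ cell has one horizontal neighbor $0$ and the other $1$. Applying this to that $0$ neighbor, the surviving zeros form isolated horizontal dominoes with pattern $1\,0\,0\,1$. Symmetrically, after the next $F_V$ step every surviving $0$ lies in a vertical domino $1\,0\,0\,1$ (read vertically). By freezing, each such zero also belonged to a horizontal domino at the previous step.

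Next I would track what happens at step $3$ (an $H$ step). Take a zero $x$ at time $2$, with horizontal partner $y$ from time $1$. If $y$ became $1$ at time $2$, then both horizontal neighbors of $x$ are $1$, so $x$ flips. If $y$ is still $0$, then $x$ keeps the odd pattern and survives. Combining this with the vertical domino structure at time $2$, I expect to show the following: after step $3$ (or at worst step $4$), every remaining zero lies in a $2\times 2$ block of zeros whose surrounding ring consists of $1$s. Such a block is an alliance. Each of its cells has exactly one $0$ and one $1$ neighbor both horizontally and vertically, and the surrounding $1$s are frozen. Hence $F^t_{HV}(c)=F^{T}_{HV}(c)$ for all $t\ge T$, with a small constant $T$.

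The main obstacle is this stabilization step: a careful case analysis proving that no cascade can continue beyond a constant time. A flip at one step might break a neighboring domino, which might break another, and so on. My first attempt would be an invariant stated at each parity of time, namely ``every zero lies in a domino in the last-updated direction and in a domino in the other direction.'' I would then show that this forces the $2\times 2$ block shape once both conditions hold simultaneously. Given a constant $T$, the state of $x$ at time $T$ is determined by the initial states in the ball of radius $T$ around $x$. Comparing it with $c(x)$ is then a fixed Boolean function of constantly many input bits, which gives the $\mathbf{AC}^0$ bound.
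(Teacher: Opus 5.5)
Your argument depends on the claim that $F_{HV}$ for rule $f_5$ reaches a fixed point after a constant number $T$ of steps. That claim is false, so there is a genuine gap.

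Take a long staircase of zeros in a sea of $1$s, for instance the cells $(i,i)$ and $(i+1,i)$ for $0\le i<k$. Every interior cell has exactly one $0$ and one $1$ neighbour horizontally, and also vertically. It therefore stays $0$ as long as its staircase neighbours do. Only an endpoint, which has two $1$ neighbours in one direction, flips. After it flips, the next cell becomes an endpoint and flips at the following step.

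So the staircase erodes from its two ends, losing at most two cells per step. The cells near its middle remain $0$ for $\Omega(k)$ steps. The cascade you worry about in your last paragraph really does happen, and its length is unbounded.

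Your proposed invariant cannot rescue this. Interior staircase cells already lie in a horizontal domino and a vertical domino, while the endpoints violate the condition at every time until the staircase has vanished. Hence ``only $2\times 2$ blocks remain after step $3$ or $4$'' fails, and the state of $x$ at any constant time $T$ does not decide the question.

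Your description of the first $H$ step is also inaccurate. The row $1\,0\,0\,0\,1$ becomes $1\,0\,1\,0\,1$, so surviving zeros need not lie in horizontal dominoes. What is true is only that no surviving zero has two horizontal $0$ neighbours.

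The missing idea is that you do not need constant-time convergence. You only need the eventual fate of $x$ to be decided locally at a constant time. The argument runs as follows:
\begin{enumerate}
\item After the first $H$ step the zero set is finite, for either background.
\item After two steps no zero has two horizontal or two vertical zero neighbours.
\item Hence every zero component is either an isolated $2\times 2$ block, which is stable forever, or a staircase (a path alternating horizontal and vertical unit steps).
\item A finite staircase is eaten from its endpoints until it disappears completely.
\end{enumerate}
Therefore $x$ changes state if and only if $c(x)=0$ and $x$ does not belong to a $2\times 2$ block of zeros in $F^2_{HV}(c)$. This is a Boolean predicate of the states in a constant-size window of $c$ around $x$, with $s$ outside $R$, which gives the $\mathbf{AC}^0$ bound. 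This is the route the paper takes.
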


\begin{proof}
  As in the case of $f_4$, for every configuration $c$, at
  $F^2_{HV}(c)$ no cell has both its horizontal neighbors or both its vertical
  neighbors in state $0$. At this point, the cells in state $0$ consist only of
  staircases and $2\times 2$ blocks (See \Cref{fig:ejemplofungalf5}).

  For this rule, all cells belonging to staircases eventually transition to state $1$ (See \Cref{fig:ejemplofungalf5}), while the cells forming $2\times 2$ blocks are stable. Hence, the prediction problem reduces to determining whether, after two iterations, the target cell $x$ belongs to a staircase or to a $2\times 2$ block.

  To decide this, it suffices to know the state after two iterations of all cells in the Moore neighborhood of $x$. After two iterations, the cell $x$ remains in state $0$ if and only if it belongs to a $2\times 2$ block; otherwise, it belongs to a staircase of $0$ and it eventually reaches state $1$.
\end{proof}

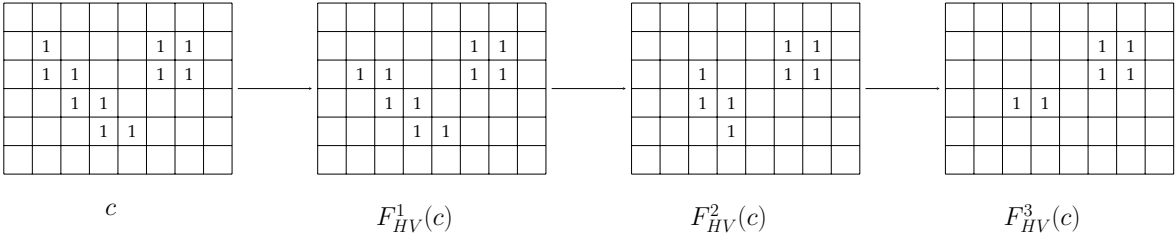
\begin{figure}[t]
\centering
\resizebox{\textwidth}{!}{\begin{tikzpicture}[
  every node/.append style={font=\relsize{+3}}
]

    \draw[line width=0.1mm] (0, 9) grid (8, 15);

    \setcounter{rows}{15}
    \setcounter{row}{0}
    \setrow { , , , , }
    \setrow { ,1, , , , 1,1,}
    \setrow { ,1,1, , , 1,1,}
    \setrow { , ,1,1, }
    \setrow { , , ,1,1}
    \setrow { , , , , }
;
    
    \draw[- stealth] (8.2,12) -- (10.8,12);

    \draw[line width=0.1mm] (11, 9) grid (19, 15);

    \setcounter{cols}{11}
    \setcounter{rows}{15}
    \setcounter{row}{0}
    \setrow { , , , , }
    \setrow { , , , , ,1 ,1,}
    \setrow { ,1,1, , ,1 ,1,}
    \setrow { , ,1,1, }
    \setrow { , , ,1,1}
    \setrow { , , , , }
;

    \draw[- stealth] (19.2,12) -- (21.8,12);
    \draw[line width=0.1mm] (22, 9) grid (30, 15);

    \setcounter{cols}{22}
    \setcounter{rows}{15}
    \setcounter{row}{0}
    \setrow { , , , , }
    \setrow { , , , , , 1,1,}
    \setrow { , ,1, , , 1,1,}
    \setrow { , ,1,1, }
    \setrow { , , ,1, }
    \setrow { , , , , }
;

    \draw[- stealth] (30.2,12) -- (32.8,12);
    \draw[line width=0.1mm] (33, 9) grid (41, 15);

    \setcounter{cols}{33}
    \setcounter{rows}{15}
    \setcounter{row}{0}
    \setrow { , , , , }
    \setrow { , , , , , 1,1,}
    \setrow { , , , , , 1,1,}
    \setrow { , ,1,1, }
    \setrow { , , , , }
    \setrow { , , , , }

\setcounter{cols}{0}

    \draw (3.5,8) node [anchor=north west,scale=1.5][inner sep=0.75pt]    {$c$};
    \draw (35,8) node [anchor=north west,scale=1.5][inner sep=0.75pt]    {$F^3_{HV}(c)$};
    \draw (13,8) node [anchor=north west,scale=1.5][inner sep=0.75pt]    {$F^1_{HV}(c)$};
    \draw (24,8) node [anchor=north west,scale=1.5][inner sep=0.75pt]    {$F^2_{HV}(c)$};
    
\end{tikzpicture}}
\caption{Configuration containing a staircase and a $2 \times 2$ block, both in state $0$, and its evolution over the FA of $f_5$. Unlabeled cells are assumed to be in state $1$. From this configuration, under the application of $f_H$, every cell having two horizontal neighbors are in state $1$ switches to state $0$. Analogously, under the application of $f_V$, every cell having two vertical neighbors are in state $1$ switches to state $0$.}
\label{fig:ejemplofungalf5}
\end{figure}

\begin{lemma}\label{lemf6}
  For the rule $f_6$ FA-pred $\in \mathbf{AC}^0$.
\end{lemma}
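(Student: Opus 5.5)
The plan is to follow the pattern of \Cref{lemf4}: show that the dynamics of the fungal automaton of $f_6$ stabilizes after two iterations of $F_{HV}$. Then deciding whether the target cell changes state reduces to evaluating a fixed Boolean function on a constant number of cells of the initial configuration.

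\textbf{Step 1 (local behaviour of $f_6$).} For $f_6$ we have $h=(1,1,0)$. A cell in state $0$ therefore switches to $1$ unless both of its neighbors in the current direction are in state $1$. Under $F_H$, a $0$-cell survives only if both horizontal neighbors are $1$. Because the rule is freezing, those neighbors stay at $1$ forever. Hence in $F_H(c)$, and in every later configuration, every $0$-cell has both horizontal neighbors in state $1$. By the same argument, in $F^2_{HV}(c)=F_V(F_H(c))$ every remaining $0$-cell also has both vertical neighbors in state $1$.

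\textbf{Step 2 (stabilization).} By Step 1, every $0$-cell of $F^2_{HV}(c)$ is isolated: all four von Neumann neighbors are in state $1$, and this remains true forever. Any later application of $f_H$ or $f_V$ sees sum $2$ and maps the cell to $0$. So $F^t_{HV}(c)=F^2_{HV}(c)$ for all $t\ge 2$. This step is the main point of the proof, but it is short. The only care needed is to use monotonicity to argue that neighbors equal to $1$ at the time of an update remain $1$ afterwards.

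\textbf{Step 3 (locality and circuit).} The state of $x$ in $F_H(c)$ depends on $x$ and $x\pm(1,0)$. Its state in $F_V(F_H(c))$ depends on $F_H(c)$ at $x$ and $x\pm(0,1)$, hence on the Moore neighborhood of $x$ in $c$. Cells outside $R$ take the background state $s$, which is a parameter. The answer is thus a fixed Boolean function of at most $9$ bits. Extracting these bits from the input, given $x$ in binary, is done by constant-depth comparisons of indices, placing the whole computation in $\mathbf{AC}^0$.
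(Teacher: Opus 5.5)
Your proposal is correct and follows essentially the same route as the paper. As for $f_4$, every $0$-cell that survives $F_H$ (resp.\ $F_V$) has both horizontal (resp.\ vertical) neighbours frozen at $1$, so the dynamics is fixed after two iterations and the answer depends only on the Moore neighbourhood of $x$ in $c$. Your Step~1 states this invariant more accurately than the paper's text, which says such cells ``take value $0$'' where it should say $1$.
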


\begin{proof}
  For this rule, following an argument similar to that for $f_4$, we observe that for any configuration $c$, if a cell has one or two horizontal neighbors in state $0$, then it takes value $0$ in the configuration $F_H(c)$. Consequently, in $F_H(c)$ there are no cells with this property, since otherwise such a cell would have had fewer neighbors in state $0$ in the configuration $c$, contradicting the \emph{freezing} property of the dynamics. An analogous argument applies to $f_V(c)$. Therefore, as for $f_4$, the dynamics only requires analyzing two iterations.
\end{proof}

\subsection{Rule \texorpdfstring{$f_1$}{TEXT}}
\label{ss:f1}

In this section, we present a nondeterministic algorithm that solves the prediction problem for the FA associated with the rule $f_1$. To this end, we characterize the \emph{alliances} of this dynamics (see \Cref{sec:ca}). We note that for the rule $f_1$, given a configuration $c$, only cells whose two horizontal neighbors are both in state $1$ may change under the application of $f_H(c)$, and analogously, only cells whose two vertical neighbors are both in state $1$ may change under the application of $f_V(c)$. With this observation in mind, we identify alliances in $c$ with the following lemma.

\begin{algorithm}[t]
	\SetAlgoLined
	\KwIn{Initial rectangular finite support of the configuration $c$. Cell $x$}
	\KwOut{\texttt{accept} if $x$ belong to an alliance as defined in \Cref{characaliance}}
    
	\If{$x$ is on a $2\times2$ block of value $0$ in $c$}{
	    \texttt{accept}\;
	}
	
    \uElseIf{Exists a cell $a = (a_1,a_2)$ in a $2\times2$ block of value $0$}{
        Choose $\lambda_1,\lambda_2\in \{-1,1\}$\;
        $f \leftarrow \texttt{False}$\;
        $s_1,s_2 \leftarrow a_1,a_2$\;
        \While{$(s_1,s_2)$ is a coordinate inside the rectangle}{
            \textbf{if} $c(s_1,s_2)=1$ \textbf{then} \texttt{reject}\;
            \textbf{if} $(s_1,s_2)=x$ \textbf{then} $f \longleftarrow \texttt{True}$\;
            \textbf{if} $(s_1,s_2)$ is at $2\times2$ block of value $0$ and $f = \texttt{True} \textbf{ then } \texttt{accept} $\;
            \textbf{if} $c(s_1+\lambda_1,s_2) = 1$ \textbf{then} \texttt{reject}\;
            $(s_1,s_2) \leftarrow (s_1+\lambda_1,s_2+\lambda_2)$
            }}
    \Else{\texttt{reject}}
	\caption{Non-deterministic algorithm to decide if $x$ belong to an alliance as defined in \Cref{characaliance}}
	\label{alg:f1}
\end{algorithm}

\begin{lemma}\label{characaliance}
Inductively (See fig \Cref{fig:alianza} for an example):
\begin{enumerate}[itemsep=0pt, topsep=0pt, parsep=0pt]
    \item A $2 \times 2$ block with cells in state $0$ is an alliance and a staircase with an endpoint contained in a $2 \times 2$ block in state $0$ is an alliance.
    \item Any set of cells in state $0$ that is the union of sets forming alliances
    is itself an alliance.
\end{enumerate}
\end{lemma}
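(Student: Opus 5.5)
The plan is to reduce everything to a single local sufficient condition for being an alliance under $f_1$, and then check that condition on $2\times 2$ blocks, on anchored staircases, and on unions. As noted just before the statement, for $f_1$ a cell in state $0$ can change under $f_H$ only if both of its horizontal neighbors are in state $1$, and under $f_V$ only if both of its vertical neighbors are in state $1$. Call a set $S$ of $0$-cells \emph{self-supporting} if every $y\in S$ has at least one horizontal neighbor in $S$ and at least one vertical neighbor in $S$.

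\textbf{Claim.} If $c|_S\equiv 0$ and $S$ is self-supporting, then $(S,c|_S)$ is an alliance, whatever the states outside $S$.

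The proof is an induction on $t$. Assume $F^t_{HV}(c)|_S\equiv 0$. The next update uses direction $\Phi(t+1)\in\{H,V\}$. Every $y\in S$ has a neighbor in that direction lying in $S$, and this neighbor is still $0$. Hence $y$ does not have both neighbors in state $1$, and $y$ stays $0$. Cells outside $S$ never enter this argument, so $c|_S$ alone determines the conclusion, which is exactly the alliance condition.

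With the claim in hand, item~1 for blocks is immediate: in a $2\times2$ block each cell has one horizontal and one vertical neighbor inside the block. Item~2 is also immediate, because self-support is preserved under unions: a neighbor in $S_i$ is a neighbor in $\bigcup_i S_i$. Note that this argument is only valid when each piece is self-supporting, and not merely an alliance in some weaker sense. For that reason I would phrase the induction of the lemma so that every generated alliance is self-supporting, and verify that both base cases of item~1 are.

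For the staircase $x_0,\dots,x_n$, let $x_0$ be the endpoint lying in a block $B$, and take $S=B\cup\{x_0,\dots,x_n\}$. By the defining formulas, consecutive cells $x_{i-1},x_i$ differ alternately by a horizontal and by a vertical unit step. So every interior cell $x_i$ with $0<i<n$ has one horizontal and one vertical neighbor among $x_{i-1},x_{i+1}$, and the cells of $B$, including $x_0$, are supported inside $B$.

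The main obstacle is the free endpoint $x_n$. It has a staircase neighbor $x_{n-1}$ in only one direction, so the local condition must be supplied from outside the staircase. Moreover, the claim cannot hold unconditionally: if $x_n$ is surrounded by $1$s in the missing direction, it flips, and the failure then propagates back along the staircase. To deal with this I would use the convention of \Cref{fig:staircase}, namely the optional blue cell at the endpoint. The free end must continue, as a run of $0$s, either into a further $0$-cell or $2\times 2$ block, or off the rectangle into the background. The second case requires the background state to be $0$, in which case the staircase continues indefinitely through $0$-cells, and in either case the missing-direction neighbor of $x_n$ is a $0$-cell of $S$. This is the reading consistent with \Cref{alg:f1}, which walks the staircase from an anchoring block until it either meets another block or leaves the rectangle. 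With this convention $x_n$ is supported, $S$ is self-supporting, and the claim gives the alliance. I would state this interpretation explicitly, and record the counterexample showing it is needed, before concluding.
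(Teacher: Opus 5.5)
Your proof is correct, and it is more explicit than the paper's. The paper calls blocks and anchored staircases ``clearly'' alliances. It justifies item~2 by the remark that enlarging the set of $0$-cells cannot give a cell two neighbors in state $1$. It then informally sketches the converse: cells outside these structures either lack a $0$-neighbor in some direction or sit on an unanchored staircase, and so they flip.

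You instead isolate one local invariant, self-support. A one-line induction shows it forces stability whatever the exterior is, which is exactly the quantifier in the definition of alliance, and both items follow from it. This exposes that the staircase clause, read literally with one anchored endpoint, is false: with $1$s around the free endpoint in its missing direction, that endpoint flips and the failure runs back along the staircase. The correct hypothesis is that both endpoints lie in $2\times 2$ blocks of zeros, which is what \Cref{alg:f1}, the proof of \Cref{lem:f1} and \Cref{fig:alianza} actually use.

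What the paper's sketch has and yours omits is the converse, which is what justifies reducing FA-pred to alliance membership. Your framework gives it cleanly. The cells that stay $0$ forever form the largest self-supporting subset of the initial zero set. In an alternating horizontal/vertical walk inside such a set, any reversal of direction closes a $2\times 2$ block. Hence every cell of a finite self-supporting set lies in a block or on a monotone staircase whose two ends lie in blocks.

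Three small corrections. First, item~2 holds for arbitrary alliances straight from the definition: a configuration agreeing with $a_1\cup a_2$ on $S_1\cup S_2$ agrees with $a_i$ on $S_i$. So your caveat about weaker alliances is unnecessary. In fact, for $f_1$ every finite alliance of $0$-cells is self-supporting (put $1$s everywhere outside $S$), so self-support is an exact characterization and restricting to it loses nothing. Second, in your endpoint fix, ``a further $0$-cell'' is not enough, since that cell needs support in turn; the run must end in a $2\times 2$ block or another self-supporting set. Third, \Cref{alg:f1}, written for background $1$, accepts only on reaching a second block. For background $0$, alliances are finite, so anchor the free end in a $2\times 2$ block of background cells rather than in a staircase that ``continues indefinitely''.
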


\begin{proof}
    Clearly, the first two types of cell sets, if initially in state $0$, are alliances. The fact that the union of alliances is again an alliance follows from the previous observation that, in this dynamics, a cell switches from $0$ to $1$ only if both of its neighbors are in state $1$. Therefore, enlarging the set of cells in state $0$ cannot cause a cell that initially did not have two neighbors in state $1$ to suddenly acquire two such neighbors. Cells that do not belong to any of the structures described above fall into one of two categories. Either they have, in some direction, no neighbors in state $0$, which implies that they will switch to state $1$ in the next iteration and are therefore not stable; or they are part of a staircase whose endpoints are not contained in $2\times 2$ blocks, a structure which is also not stable (Staircases without endpoints in a $2\times 2$ block behave the same way as for rule $f_5$ in \Cref{fig:ejemplofungalf5}).
\end{proof}

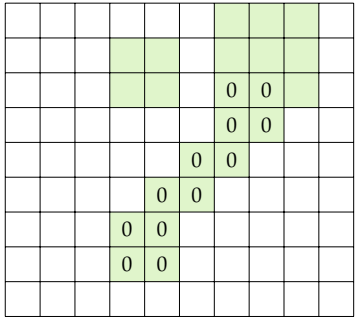
\begin{figure}[t]
\centering
\resizebox{0.3\textwidth}{!}{\begin{tikzpicture}[
  every node/.append style={font=\relsize{+3}}]

    \draw  [draw opacity=0][fill={rgb, 255:red, 184; green, 233; blue, 134 }  ,fill opacity=0.43 ] (6,6) -- (9,6) -- (9,9) -- (6,9) -- cycle ;

    \draw  [draw opacity=0][fill={rgb, 255:red, 184; green, 233; blue, 134 }  ,fill opacity=0.43 ] (3,6) -- (5,6) -- (5,8) -- (3,8) -- cycle ;
    
    \draw[draw opacity=0, fill={rgb,255:red,184; green,233; blue,134}, fill opacity=0.43] (3,1) -- (5,1) -- (5,3) -- (3,3) -- cycle;
    \draw[draw opacity=0, fill={rgb,255:red,184; green,233; blue,134}, fill opacity=0.43] (4,3) -- (6,3) -- (6,4) -- (4,4) -- cycle;
    \draw[draw opacity=0, fill={rgb,255:red,184; green,233; blue,134}, fill opacity=0.43] (5,4) -- (7,4) -- (7,5) -- (5,5) -- cycle;

    \draw[draw opacity=0, fill={rgb,255:red,184; green,233; blue,134}, fill opacity=0.43] (6,5) -- (8,5) -- (8,6) -- (6,6) -- cycle;
    \draw[line width=0.1mm] (0, 0) grid (10, 9);

    \setcounter{cols}{0}
    \setcounter{rows}{9}
    \setcounter{row}{0}
    \setrow { , , , , , , , , , }
    \setrow { , , , , , , , , , }
    \setrow { , , , , , ,0,0, , }
    \setrow { , , , , , ,0,0, }
    \setrow { , , , , ,0,0, , }
    \setrow { , , , ,0,0, , , }
    \setrow { , , ,0,0, , , , }
    \setrow { , , ,0,0}
    \setrow { , , , , , , , , }
    
\end{tikzpicture}}
\caption{In green we show an example of alliance, with all its cells initially in state $0$. The cells explicitly labeled with a $0$ highlight an alliance which satisfies Item 2 of the inductive definition. The green cells without a written $0$ are part of the alliance due to the union  of additional $2\times 2$ blocks to the alliance formed by the labeled cells. The green cells that do \textbf{not} contain a written $0$ do not form an alliance by themselves; however, the union of alliances is not required to be disjoint. The isolated green cells could be or not included as part of the alliance, but this makes no difference, since in this dynamics we are only interested in the connected component (of cells in state $0$) that contains the target cell.}
\label{fig:alianza}
\end{figure}



As a consequence, the prediction problem reduces to deciding whether the target cell belongs to an alliance, and we have the following lemma.

\begin{lemma}\label{lem:f1}
  For the rule $f_1$ FA-pred $\in \textbf{NL}$.
\end{lemma}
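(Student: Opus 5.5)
The plan is to reduce FA-pred for $f_1$ to deciding membership of the target cell $x$ in an alliance of the initial configuration $c$, and then to show that this membership test is carried out by \Cref{alg:f1} in nondeterministic logarithmic space.

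The first step is the reduction itself. Since $f_1$ is freezing, $x$ changes state if and only if $c(x)=0$ and $x$ eventually reaches $1$; if $c(x)=1$ we immediately answer no. For $c(x)=0$ I would argue that $x$ stays $0$ forever exactly when $x$ lies in an alliance. One direction is the definition of alliance. For the converse, let $S$ be the set of cells that remain $0$ forever. Every cell of $S$ must, at each update, have at least one neighbor in $S$ in the relevant direction: horizontally at $H$ steps and vertically at $V$ steps. Otherwise, once the rest of the neighborhood has frozen to $1$, the cell would flip. By the characterization in \Cref{characaliance}, such a set is a union of $2\times2$ zero blocks and staircases anchored at such blocks. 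So $x\in S$ iff $x$ is in a $2\times2$ block of zeros of $c$, or $x$ lies on a zero staircase of $c$ with an endpoint in such a block. The background state $s$ only matters through cells just outside $R$. If $s=0$, the background is a union of $2\times 2$ zero blocks, so I would enlarge $R$ by a frame of width $2$ and treat it as part of $c$; if $s=1$, the background never changes and acts as a border of ones.

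The second step is the algorithmic part. The algorithm nondeterministically guesses an anchor block cell $a$ and a direction $(\lambda_1,\lambda_2)\in\{-1,1\}^2$. It then walks the staircase cell by cell, rejecting as soon as a required cell holds $1$, and accepts when it has passed $x$ and reached a $2\times2$ zero block. At any time it only stores the current coordinates $(s_1,s_2)$, the coordinates of $a$ and $x$, the two signs, and a flag. Each of these takes $O(\log |R|)$ bits, and checking whether a cell belongs to a $2\times2$ zero block only looks at constantly many input cells. Hence the procedure runs in $\mathbf{NL}$. Termination is guaranteed because each step moves strictly diagonally, so the walk leaves $R$ after at most $O(\text{side length})$ steps.

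The main obstacle is correctness of the converse direction in the first step, i.e.\ that \Cref{characaliance} covers all stable zero cells. In particular one must check that a staircase anchored at only one end is stable. Here the $HV$ alternation matters: the staircase cells alternately rely on their horizontal and vertical partner. I would verify this by a local case check, and compare the staircase orientation so that it matches the pattern tested by \Cref{alg:f1} (the condition $c(s_1+\lambda_1,s_2)=0$). I would also verify that an unanchored staircase erodes from its free endpoints, as in the $f_5$ picture. If the walk in \Cref{alg:f1} only recognises staircases that run from one block to another, I would additionally guess the ``one free end'' case by accepting when the walk reaches $x$ after starting from a block, which is still an $\mathbf{NL}$ procedure.
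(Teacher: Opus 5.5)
Your overall route is the paper's: reduce FA-pred to whether $x$ lies in an alliance (\Cref{characaliance}), then check this with the nondeterministic staircase walk of \Cref{alg:f1}, storing only $O(\log|R|)$ bits. The genuine problem is in your last paragraph. A staircase anchored in a $2\times 2$ block at only one end is \emph{not} stable, so the ``one free end'' acceptance you propose as a fallback would make the algorithm answer incorrectly.

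Your own necessary condition already shows this. The free endpoint $e$ has a zero partner in only one direction, say horizontally, and its two vertical neighbours are $1$, so it flips at the next $V$ update. Its horizontal partner then has no horizontal zero neighbour left and flips at the next $H$ update, and so on; the erosion runs back to the block. Concretely, with $s=1$, take the zero block $\{0,1\}^2$ plus the tail $(2,1),(2,2)$, all other cells being $1$. The cell $x=(2,2)$ has both horizontal neighbours equal to $1$ and flips at the first $H$ update. Yet the walk from $a=(1,1)$ with $\lambda_1=\lambda_2=1$ reaches it, and your fallback would accept.

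The phrase ``a staircase with an endpoint contained in a $2\times 2$ block'' in \Cref{characaliance} must be read as \emph{both} endpoints. That is what \Cref{alg:f1} tests: it accepts only after passing $x$ \emph{and then} meeting a second block. Your step~2 already does this, so keep it and drop the fallback.

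The item you flag as the main obstacle, that every surviving zero is covered, is settled by your partner condition. It also explains why exactly two anchors appear; the paper's proof of \Cref{characaliance} only sketches this. Let $S$ be the set of cells that stay $0$ forever, so each cell of $S$ has a horizontal and a vertical neighbour in $S$. From $x$, follow alternately a horizontal and a vertical partner in $S$, in both directions.
\begin{itemize}
\item Suppose the walk makes horizontal steps $+1$ and then $-1$ around a vertical step $v$ (or any symmetric variant). Then the cells $p$, $p+(1,0)$, $p+(1,v)$, $p+(0,v)$ all lie in $S$ and form a $2\times 2$ zero block.
\item A walk with no such reversal is a monotone staircase, which leaves every bounded region. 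This is impossible when $s=1$, since $S\subseteq R$. When $s=0$, it runs into your all-zero frame, which consists of $2\times 2$ zero blocks.
\end{itemize}
Hence $x$ is in a $2\times 2$ zero block or on a monotone zero staircase whose two ends lie in such blocks. This is exactly the certificate the walk guesses.

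Finally, the walk accepts precisely when $x$ \emph{never} changes, i.e.\ it decides the complement of FA-pred. To conclude you therefore need $\mathbf{NL}=\mathbf{coNL}$ (Immerman--Szelepcs\'enyi); the paper leaves this step implicit too.
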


We study the case of background state $s = 1$; the other case is analogous. 

\begin{proof}
    To solve the prediction problem, we provide a nondeterministic algorithm \Cref{alg:f1} that runs in logarithmic space with respect to the size of the rectangle and checks whether the target cell belongs to an alliance.
    
    Determining whether a cell lies in a $2 \times 2$ block alliance is straightforward and is handled in Step \textbf{1}. Otherwise, the algorithm nondeterministically guesses a staircase whose endpoints are $2 \times 2$ blocks, which contains the target cell $x$, and such that all of its cells are in state $0$.
 
    More precisely, in Step \textbf{4} the algorithm selects a $2 \times 2$ block that will serve as one endpoint of the staircase. This step does not require explicitly searching for such a block; instead, it suffices to nondeterministically choose a cell and proceed only if it belongs to a$2 \times 2$ block. The values $\lambda_1,\lambda_2$ encode the direction in which the staircase of cells in state $0$ is explored. During the \textbf{while}-loop of Step \textbf{8}, the algorithm traverses the staircase by alternately taking one horizontal and one vertical step, following the directions specified by the $\lambda$ values. In Step \textbf{7}, the variables $s$ are initialized; these indicate which part of the grid is currently being inspected along the staircase. The values of $s$ are updated in Steps \textbf{12} and \textbf{13}.

    If at any point the staircase ceases to consist exclusively of cells in state $0$, the computation is immediately rejected, as specified in Steps \textbf{9} and \textbf{13}, since this would mean that the guessed staircase is invalid. The variable $f$ records whether the target cell $x$ has already been visited. If the traversal reaches a $2 \times 2$ block after the cell $x$ has been encountered, the algorithm accepts.

    The traversal of the staircase can be performed while storing only the current coordinates, which requires logarithmic space in the size of the rectangle.
\end{proof}

A deterministic algorithm could also be constructed as in \Cref{alg:f1det} to detect the presence of such a staircase, by replacing the \textbf{while}-loop with an explicit universal iteration over all cells (restarting every store value before every iteration) and by testing all possible choices of the parameters $\lambda$. However, the nondeterministic approach described above is simpler and sufficient to show that the decision problem lies below \textbf{P}, and is therefore very unlikely to be \textbf{P}-complete.
\begin{algorithm}[t]
	\SetAlgoLined
	\KwIn{Initial rectangular finite support of the configuration $c$. Cell $x$}
	\KwOut{\texttt{accept} if $x$ belong to an alliance as defined in \Cref{characaliance}}
    
	\If{$x$ is on a $2\times2$ block of value $0$ in $c$}{
	    \texttt{accept}\;
	}
	
    \For{ cell $a = (a_1,a_2)$ in a $2\times2$ block of value $0$}{
        \For{ $\lambda_1,\lambda_2\in \{-1,1\}$}{
            $f \leftarrow \texttt{False}$\;
            $s_1,s_2 \leftarrow a_1,a_2$\;
            \While{$(s_1,s_2)$ is a coordinate inside the rectangle}{
                \textbf{if} $c(s_1,s_2)=1$ \textbf{then} \texttt{break}\;
                \textbf{if} $(s_1,s_2)=x$ \textbf{then} $f \longleftarrow \texttt{True}$\;
                \textbf{if} $(s_1,s_2)$ is at $2\times2$ block of value $0$ and $f = \texttt{True} \textbf{ then } \texttt{accept} $\;
                \textbf{if} $c(s_1+\lambda_1,s_2) = 1$ \textbf{then} \texttt{break}\;
                $(s_1,s_2) \leftarrow (s_1+\lambda_1,s_2+\lambda_2)$
                }}}
    \texttt{reject}
	\caption{Deterministic algorithm to decide if $x$ belong to an alliance as defined in \Cref{characaliance}}
	\label{alg:f1det}
\end{algorithm}

\subsection{Rule \texorpdfstring{$f_2$}{TEXT}}

\label{secf2}

In this section, we discuss the prediction problem for the FA associated with $f_2$ (See \Cref{tab:rules})  with background state equal 1. For this rule, we first observe that, given a configuration~$c$, the only nontrivial \emph{alliances} (consisting of cells in state~$0$, since the dynamics is freezing and all cells in state~$1$ trivially form alliances) are isolated cells in state~$0$ whose neighbors in the von~Neumann neighborhood are all in state~$1$ (see~\Cref{fig:f2stable},\Cref{fig:Ejf2}). The dynamics converges to a configuration consisting of isolated cells in state~$0$ surrounded by cells in state~$1$ (see~\Cref{fig:f2stable}).

\begin{figure}
\centering\resizebox{0.2\textwidth}{!}{
\begin{tikzpicture}[
  every node/.append style={font=\relsize{+3}}
]

\draw[draw opacity=0,
      fill={rgb,255:red,184; green,233; blue,134},
      fill opacity=1]
      (1,3) rectangle (2,4);

\draw[draw opacity=0,
      fill={rgb,255:red,184; green,233; blue,134},
      fill opacity=1]
      (2,2) rectangle (3,3);

    \draw[line width=0.1mm] (0, 0) grid (5, 5);

    \setcounter{rows}{5}
    \setcounter{row}{0}
    \setrow {1,1,1,1,1}
    \setrow {1,0,1,1,1}
    \setrow {1,1,0,1,1}
    \setrow {1,1,1,1,1}
    \setrow {1,1,1,1,1};

\end{tikzpicture}}
\caption{Example of an alliance, and the unique type of configuration to which the
dynamics converges: a sea of $1$s with stable issolated $0$s.}
\label{fig:f2stable}
\end{figure}
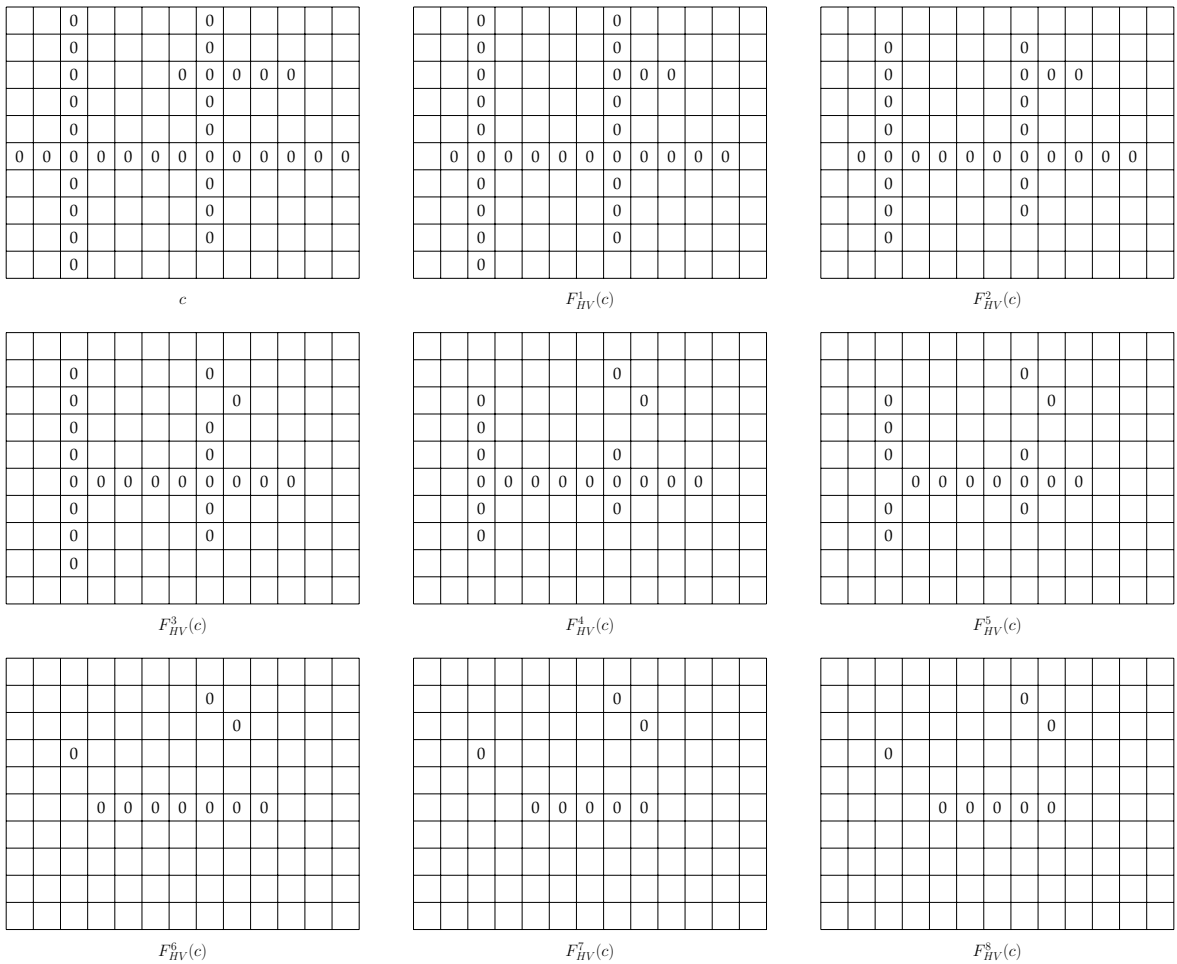

\begin{figure}
\centering\resizebox{\textwidth}{!}{
\begin{tikzpicture}[
  every node/.append style={font=\relsize{+3}}
]

\draw[line width=0.1mm] (0, 0) grid (13, 10);

\setcounter{cols}{0}
\setcounter{rows}{34}
\setcounter{row}{0}
\setrow { , ,0, , , , ,0, , , , }
\setrow { , ,0, , , , ,0, , , , }
\setrow { , ,0, , , ,0,0,0,0,0, }
\setrow { , ,0, , , , ,0, , , , }
\setrow { , ,0, , , , ,0, , , ,}
\setrow {0,0,0,0,0,0,0,0,0,0,0,0,0}
\setrow { , ,0, , , , ,0, , , }
\setrow { , ,0, , , , ,0, , , , }
\setrow { , ,0, , , , ,0, , , , }
\setrow { , ,0, , , , , , , , , }
;

\draw[line width=0.1mm] (15, 0) grid (28, 10);

\setcounter{cols}{15}
\setcounter{rows}{34}
\setcounter{row}{0}
\setrow { , ,0, , , , ,0, , , , }
\setrow { , ,0, , , , ,0, , , , }
\setrow { , ,0, , , , ,0,0,0, , }
\setrow { , ,0, , , , ,0, , , , }
\setrow { , ,0, , , , ,0, , , ,}
\setrow { ,0,0,0,0,0,0,0,0,0,0,0, }
\setrow { , ,0, , , , ,0, , , }
\setrow { , ,0, , , , ,0, , , , }
\setrow { , ,0, , , , ,0, , , , }
\setrow { , ,0, , , , , , , , , }
;

\draw[line width=0.1mm] (30, 0) grid (43, 10);

\setcounter{cols}{30}
\setcounter{rows}{34}
\setcounter{row}{0}
\setrow { , , , , , , , , , , , }
\setrow { , ,0, , , , ,0, , , , }
\setrow { , ,0, , , , ,0,0,0, , }
\setrow { , ,0, , , , ,0, , , , }
\setrow { , ,0, , , , ,0, , , ,}
\setrow { ,0,0,0,0,0,0,0,0,0,0,0, }
\setrow { , ,0, , , , ,0, , , }
\setrow { , ,0, , , , ,0, , , , }
\setrow { , ,0, , , , , , , , , }
\setrow { , , , , , , , , , , , }
;

\draw[line width=0.1mm] (0, 12) grid (13, 22);
\draw[line width=0.1mm] (15,12) grid (28, 22);
\draw[line width=0.1mm] (30,12) grid (43, 22);

\setcounter{cols}{0}
\setcounter{rows}{22}
\setcounter{row}{0}
\setrow { , , , , , , , , , , , }
\setrow { , ,0, , , , ,0, , , , }
\setrow { , ,0, , , , , ,0, , , }
\setrow { , ,0, , , , ,0, , , , }
\setrow { , ,0, , , , ,0, , , ,}
\setrow { , ,0,0,0,0,0,0,0,0,0, , }
\setrow { , ,0, , , , ,0, , , }
\setrow { , ,0, , , , ,0, , , , }
\setrow { , ,0, , , , , , , , , }
\setrow { , , , , , , , , , , , }
;

\setcounter{cols}{15}
\setcounter{rows}{22}
\setcounter{row}{0}
\setrow { , , , , , , , , , , , }
\setrow { , , , , , , ,0, , , , }
\setrow { , ,0, , , , , ,0, , , }
\setrow { , ,0, , , , , , , , , }
\setrow { , ,0, , , , ,0, , , ,}
\setrow { , ,0,0,0,0,0,0,0,0,0, , }
\setrow { , ,0, , , , ,0, , , }
\setrow { , ,0, , , , , , , , , }
\setrow { , , , , , , , , , , , }
\setrow { , , , , , , , , , , , }
;
\setcounter{cols}{30}
\setcounter{rows}{22}
\setcounter{row}{0}
\setrow { , , , , , , , , , , , }
\setrow { , , , , , , ,0, , , , }
\setrow { , ,0, , , , , ,0, , , }
\setrow { , ,0, , , , , , , , , }
\setrow { , ,0, , , , ,0, , , ,}
\setrow { , , ,0,0,0,0,0,0,0, , , }
\setrow { , ,0, , , , ,0, , , }
\setrow { , ,0, , , , , , , , , }
\setrow { , , , , , , , , , , , }
\setrow { , , , , , , , , , , , }
;

\draw[line width=0.1mm] (0, 24) grid (13, 34);
\draw[line width=0.1mm] (15,24) grid (28, 34);
\draw[line width=0.1mm] (30,24) grid (43, 34);

\setcounter{cols}{0}
\setcounter{rows}{10}
\setcounter{row}{0}
\setrow { , , , , , , , , , , , }
\setrow { , , , , , , ,0, , , , }
\setrow { , , , , , , , ,0, , , }
\setrow { , ,0, , , , , , , , , }
\setrow { , , , , , , , , , , ,}
\setrow { , , ,0,0,0,0,0,0,0, , , }
\setrow { , , , , , , , , , , }
\setrow { , , , , , , , , , , , }
\setrow { , , , , , , , , , , , }
\setrow { , , , , , , , , , , , }
;

\setcounter{cols}{15}
\setcounter{rows}{10}
\setcounter{row}{0}
\setrow { , , , , , , , , , , , }
\setrow { , , , , , , ,0, , , , }
\setrow { , , , , , , , ,0, , , }
\setrow { , ,0, , , , , , , , , }
\setrow { , , , , , , , , , , ,}
\setrow { , , , ,0,0,0,0,0, , , , }
\setrow { , , , , , , , , , , }
\setrow { , , , , , , , , , , , }
\setrow { , , , , , , , , , , , }
\setrow { , , , , , , , , , , , }
;

\setcounter{cols}{30}
\setcounter{rows}{10}
\setcounter{row}{0}
\setrow { , , , , , , , , , , , }
\setrow { , , , , , , ,0, , , , }
\setrow { , , , , , , , ,0, , , }
\setrow { , ,0, , , , , , , , , }
\setrow { , , , , , , , , , , ,}
\setrow { , , , ,0,0,0,0,0, , , , }
\setrow { , , , , , , , , , , }
\setrow { , , , , , , , , , , , }
\setrow { , , , , , , , , , , , }
\setrow { , , , , , , , , , , , }
;

\node at (6.5,-0.8) {$F^6_{HV}(c)$};
\node at (21.5,-0.8) {$F^7_{HV}(c)$};
\node at (36.5,-0.8) {$F^8_{HV}(c)$};

\node at (6.5,11.2) {$F^3_{HV}(c)$};
\node at (21.5,11.2) {$F^4_{HV}(c)$};
\node at (36.5,11.2) {$F^5_{HV}(c)$};

\node at (6.5,23.2) {$c$};
\node at (21.5,23.2) {$F^1_{HV}(c)$};
\node at (36.5,23.2) {$F^2_{HV}(c)$};
\end{tikzpicture}}

\caption{Successive updates of the fungal automaton associated with $f_2$ from a given initial configuration $c$. At each iteration, only cells with exactly one neighbor in state~$0$ and the other in state~$1$ change state.}
\label{fig:Ejf2}
\end{figure}

The configuration $c$ shown in \Cref{fig:Ejf2} is of particular interest, as it satisfies a property that simplifies its analysis: for every pair of cells in state~$0$, there exists a unique sequence of adjacent cells in state~$0$ connecting them. We refer to such configurations as \emph{acyclic configurations}. The behavior on these configurations is well characterized. As illustrated in the figure, the rows and columns are stable except at their endpoints. However, the most important result concerning the FA-pred problem for $f_2$ is provided by \Cref{teof2}, which applies to automata networks. In order to use it, we must define an equivalent easy to compute freezing automata network.

First, we observe that for every fungal automaton, one can define an associated cellular automaton with equivalent dynamics by composing the global update function. In particular, the global update rule
\[
F^2(c) = F_V(F_H(c))
\]
is induced by a local rule $f_2'$ defined over the Moore neighborhood $M$. Hence, we may consider the cellular automaton $(\{0,1\}, M, f_2')$ (see \Cref{secfunggen}).

More generally, if the word $Z$ defining the fungal automaton has length greater than two, the global rule of the associated cellular automaton would be of the form $F^{|Z|}$, where $|Z|$ denotes the length of $Z$, and the corresponding local function would have domain $M'$, a suitably larger neighborhood. However, this definition becomes somewhat artificial for fungal automata defined by arbitrary words $Z$, as it loses the structural simplicity that we aim to preserve in the model.

In the present case, we nevertheless adopt this definition, since it allows us to apply the result of \cite{paramcom}. Once the associated cellular automaton of a fungal automaton has been defined, we turn to the definition of its associated automata network.

\begin{definition}
    For a configuration $c$ whose cells with value different from $1$ are contained in a rectangle $R \subseteq \mathbb{Z}^2$, the \emph{underlying graph} is the graph $(V, E_H \cup E_V \cup E_D)$, where:\begin{align*}
        && V &= \{x \in \Z^2 \,|\, c(x) = 0\}\\
        && E_H &= \{xy \,|\, x,y \text{ are horizontal neighbors}\}\\
        && E_V &= \{xy \,|\, x,y \text{ are vertical neighbors}\}\\
        && E_D &= \{xy \,|\, \exists z \text{ such that }zx \in E_H \text{ and }zy \in E_V\}\\
    \end{align*}
\end{definition}

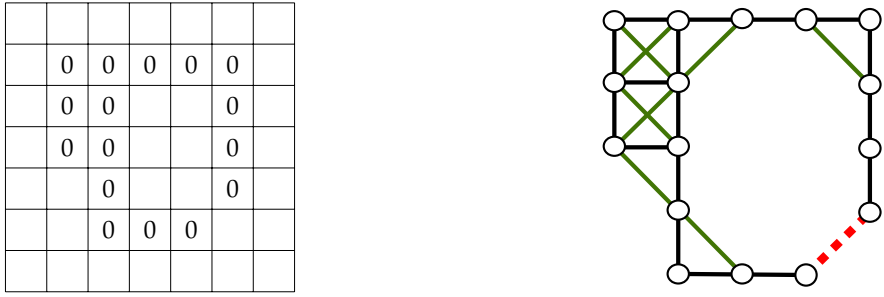
\begin{figure}[t]
\begin{minipage}[t][0.2\textheight][c]{0.5\textwidth}
\centering
\centering\resizebox{0.5\textwidth}{!}{\begin{tikzpicture}[
  every node/.append style={font=\relsize{+3}}
]
    
    \draw[line width=0.1mm] (0, 0) grid (7, 7);
    \setcounter{cols}{0}
    \setcounter{rows}{7}
    \setcounter{row}{0}
    \setrow { , , , , }
    \setrow { ,0,0,0,0,0, }
    \setrow { ,0,0, , ,0, }
    \setrow { ,0,0, , ,0, }
    \setrow { , ,0, , ,0, };
    \setrow { , ,0,0,0, , };

\end{tikzpicture}}\\
\end{minipage}
\begin{minipage}[t][0.2\textheight][c]{0.5\textwidth}
\centering
\centering\resizebox{0.5\textwidth}{!}{\tikzset{every picture/.style={line width=0.75pt}} 

\begin{tikzpicture}[x=0.75pt,y=0.75pt,yscale=-1,xscale=1]

\draw [color={rgb, 255:red, 65; green, 117; blue, 5 }  ,draw opacity=1 ][line width=1.5]    (75,144) -- (105,174.5) ;
\draw [color={rgb, 255:red, 65; green, 117; blue, 5 }  ,draw opacity=1 ][line width=1.5]    (75,84.5) -- (105,55) ;
\draw [color={rgb, 255:red, 65; green, 117; blue, 5 }  ,draw opacity=1 ][line width=1.5]    (135,55) -- (165,85.5) ;
\draw [color={rgb, 255:red, 0; green, 0; blue, 0 }  ,draw opacity=1 ][line width=1.5]    (45,84.5) -- (75,84.5) ;
\draw [color={rgb, 255:red, 0; green, 0; blue, 0 }  ,draw opacity=1 ][line width=1.5]    (45,115) -- (75,115) ;
\draw [color={rgb, 255:red, 65; green, 117; blue, 5 }  ,draw opacity=1 ][line width=1.5]    (75,55) -- (60.89,68.88) -- (45,84.5) ;
\draw [color={rgb, 255:red, 65; green, 117; blue, 5 }  ,draw opacity=1 ][line width=1.5]    (75,85.5) -- (60.89,99.38) -- (45,115) ;
\draw [color={rgb, 255:red, 65; green, 117; blue, 5 }  ,draw opacity=1 ][line width=1.5]    (45,55) -- (75,85.5) ;
\draw [color={rgb, 255:red, 65; green, 117; blue, 5 }  ,draw opacity=1 ][line width=1.5]    (45,84) -- (75,114.5) ;
\draw [color={rgb, 255:red, 65; green, 117; blue, 5 }  ,draw opacity=1 ][line width=1.5]    (45,114.5) -- (75,145) ;
\draw [color={rgb, 255:red, 0; green, 0; blue, 0 }  ,draw opacity=1 ][line width=1.5]    (45,55.5) -- (45,114.5) ;
\draw [color={rgb, 255:red, 0; green, 0; blue, 0 }  ,draw opacity=1 ][line width=1.5]    (75,55.5) -- (75,174.5) ;
\draw [color={rgb, 255:red, 0; green, 0; blue, 0 }  ,draw opacity=1 ][line width=1.5]    (165,56) -- (165,145.5) ;
\draw [color={rgb, 255:red, 0; green, 0; blue, 0 }  ,draw opacity=1 ][line width=1.5]    (75,174.5) -- (135,175) ;
\draw [color={rgb, 255:red, 0; green, 0; blue, 0 }  ,draw opacity=1 ][line width=1.5]    (45,55) -- (165,55) ;
\draw [color={rgb, 255:red, 255; green, 0; blue, 0 }  ,draw opacity=1 ][line width=3]  [dash pattern={on 3.38pt off 3.27pt}]  (135,175) -- (165,145.5) ;
\draw  [fill={rgb, 255:red, 255; green, 255; blue, 255 }  ,fill opacity=1 ] (40,55.5) .. controls (40,53.01) and (42.24,51) .. (45,51) .. controls (47.76,51) and (50,53.01) .. (50,55.5) .. controls (50,57.99) and (47.76,60) .. (45,60) .. controls (42.24,60) and (40,57.99) .. (40,55.5) -- cycle ;
\draw  [fill={rgb, 255:red, 255; green, 255; blue, 255 }  ,fill opacity=1 ] (40,84.5) .. controls (40,82.01) and (42.24,80) .. (45,80) .. controls (47.76,80) and (50,82.01) .. (50,84.5) .. controls (50,86.99) and (47.76,89) .. (45,89) .. controls (42.24,89) and (40,86.99) .. (40,84.5) -- cycle ;
\draw  [fill={rgb, 255:red, 255; green, 255; blue, 255 }  ,fill opacity=1 ] (70,84.5) .. controls (70,82.01) and (72.24,80) .. (75,80) .. controls (77.76,80) and (80,82.01) .. (80,84.5) .. controls (80,86.99) and (77.76,89) .. (75,89) .. controls (72.24,89) and (70,86.99) .. (70,84.5) -- cycle ;
\draw  [fill={rgb, 255:red, 255; green, 255; blue, 255 }  ,fill opacity=1 ] (70,55.5) .. controls (70,53.01) and (72.24,51) .. (75,51) .. controls (77.76,51) and (80,53.01) .. (80,55.5) .. controls (80,57.99) and (77.76,60) .. (75,60) .. controls (72.24,60) and (70,57.99) .. (70,55.5) -- cycle ;
\draw  [fill={rgb, 255:red, 255; green, 255; blue, 255 }  ,fill opacity=1 ] (40,114.5) .. controls (40,112.01) and (42.24,110) .. (45,110) .. controls (47.76,110) and (50,112.01) .. (50,114.5) .. controls (50,116.99) and (47.76,119) .. (45,119) .. controls (42.24,119) and (40,116.99) .. (40,114.5) -- cycle ;
\draw  [fill={rgb, 255:red, 255; green, 255; blue, 255 }  ,fill opacity=1 ] (70,114.5) .. controls (70,112.01) and (72.24,110) .. (75,110) .. controls (77.76,110) and (80,112.01) .. (80,114.5) .. controls (80,116.99) and (77.76,119) .. (75,119) .. controls (72.24,119) and (70,116.99) .. (70,114.5) -- cycle ;
\draw  [fill={rgb, 255:red, 255; green, 255; blue, 255 }  ,fill opacity=1 ] (70,144.5) .. controls (70,142.01) and (72.24,140) .. (75,140) .. controls (77.76,140) and (80,142.01) .. (80,144.5) .. controls (80,146.99) and (77.76,149) .. (75,149) .. controls (72.24,149) and (70,146.99) .. (70,144.5) -- cycle ;
\draw  [fill={rgb, 255:red, 255; green, 255; blue, 255 }  ,fill opacity=1 ] (70,174.5) .. controls (70,172.01) and (72.24,170) .. (75,170) .. controls (77.76,170) and (80,172.01) .. (80,174.5) .. controls (80,176.99) and (77.76,179) .. (75,179) .. controls (72.24,179) and (70,176.99) .. (70,174.5) -- cycle ;
\draw  [fill={rgb, 255:red, 255; green, 255; blue, 255 }  ,fill opacity=1 ] (100,174.5) .. controls (100,172.01) and (102.24,170) .. (105,170) .. controls (107.76,170) and (110,172.01) .. (110,174.5) .. controls (110,176.99) and (107.76,179) .. (105,179) .. controls (102.24,179) and (100,176.99) .. (100,174.5) -- cycle ;
\draw  [fill={rgb, 255:red, 255; green, 255; blue, 255 }  ,fill opacity=1 ] (130,175) .. controls (130,177.76) and (132.24,180) .. (135,180) .. controls (137.76,180) and (140,177.76) .. (140,175) .. controls (140,172.24) and (137.76,170) .. (135,170) .. controls (132.24,170) and (130,172.24) .. (130,175) -- cycle ;
\draw  [fill={rgb, 255:red, 255; green, 255; blue, 255 }  ,fill opacity=1 ] (100,54.5) .. controls (100,52.01) and (102.24,50) .. (105,50) .. controls (107.76,50) and (110,52.01) .. (110,54.5) .. controls (110,56.99) and (107.76,59) .. (105,59) .. controls (102.24,59) and (100,56.99) .. (100,54.5) -- cycle ;
\draw  [fill={rgb, 255:red, 255; green, 255; blue, 255 }  ,fill opacity=1 ] (130,55) .. controls (130,57.76) and (132.24,60) .. (135,60) .. controls (137.76,60) and (140,57.76) .. (140,55) .. controls (140,52.24) and (137.76,50) .. (135,50) .. controls (132.24,50) and (130,52.24) .. (130,55) -- cycle ;
\draw  [fill={rgb, 255:red, 255; green, 255; blue, 255 }  ,fill opacity=1 ] (160,55) .. controls (160,57.76) and (162.24,60) .. (165,60) .. controls (167.76,60) and (170,57.76) .. (170,55) .. controls (170,52.24) and (167.76,50) .. (165,50) .. controls (162.24,50) and (160,52.24) .. (160,55) -- cycle ;
\draw  [fill={rgb, 255:red, 255; green, 255; blue, 255 }  ,fill opacity=1 ] (160,85.5) .. controls (160,83.01) and (162.24,81) .. (165,81) .. controls (167.76,81) and (170,83.01) .. (170,85.5) .. controls (170,87.99) and (167.76,90) .. (165,90) .. controls (162.24,90) and (160,87.99) .. (160,85.5) -- cycle ;
\draw  [fill={rgb, 255:red, 255; green, 255; blue, 255 }  ,fill opacity=1 ] (160,115.5) .. controls (160,113.01) and (162.24,111) .. (165,111) .. controls (167.76,111) and (170,113.01) .. (170,115.5) .. controls (170,117.99) and (167.76,120) .. (165,120) .. controls (162.24,120) and (160,117.99) .. (160,115.5) -- cycle ;
\draw  [fill={rgb, 255:red, 255; green, 255; blue, 255 }  ,fill opacity=1 ] (160,145.5) .. controls (160,143.01) and (162.24,141) .. (165,141) .. controls (167.76,141) and (170,143.01) .. (170,145.5) .. controls (170,147.99) and (167.76,150) .. (165,150) .. controls (162.24,150) and (160,147.99) .. (160,145.5) -- cycle ;

\end{tikzpicture}}
\end{minipage}
\caption{Example of an underlying graph associated with a configuration. On the left, a configuration is shown, where unlabeled cells are assumed to have value~$1$. On the right, its underlying graph $(V,E_H \cup E_V \cup E_D)$ is shown. $E_H$ and $E_V$ edges are shown in black, and the edges in $E_D$ are shown in dark green. The definition of $E_D$ is chosen so that the dotted red edge is not included in the underlying graph of the dynamics, since the cells in state~$0$ corresponding to those two nodes do not interact in less than to updates during the dynamics.}
\label{fig:lemma3}
\end{figure}

In the definition of the underlying graph, a distinction is made between horizontal, vertical, and diagonal edges. Although this distinction is lost when the object is treated purely as a graph, it is important for defining the update function of the underlying automata network that we study.

\begin{definition} 
The \emph{underlying automata network} of the fungal automaton associated with $f_2$ on a configuration $c$ is the automata network $(G,\mathcal{F})$, where $G$ is the underlying graph and $\mathcal{F}=\{\mathcal{F}_x\}_{x\in V(G)}$ is the family of local update functions defined as follows.

For each configuration $c \in \{0,1\}^{V(G)}$ and each $x \in V(G)$, we define a configuration $c'$ on a $3\times 3$ grid (that is, on the set $\{(i,j)\mid i,j\in\{-1,0,1\}\}$) as follows:

\begin{itemize}
    \item The central value is given by $c(x)$, that is, $c'(0,0)=c(x)$.

    \item The horizontal cells are determined from the values in $c$ of the vertices connected to $x$ by edges in $E_H$. More precisely, if there exists one (resp.\ two) edges in $E_H$ connecting $x$ to one (resp.\ two) vertices $y$ (resp.\ $y,z$) with $c(y)=0$ (resp.\ $c(y)=c(z)=0$), then $c'(-1,0)=0$ (resp.\ $c'(-1,0)=c'(1,0)=0$).

    \item The vertical cells are defined analogously using the edges in $E_V$. That is, if there exists one (resp.\ two) edges in $E_V$ connecting $x$ to one (resp.\ two) vertices $y$ (resp.\ $y,z$) with $c(y)=0$ (resp.\ $c(y)=c(z)=0$), then $c'(0,-1)=0$ (resp.\ $c'(0,-1)=c'(0,1)=0$).

    \item The diagonal cells are determined from the edges in $E_D$ in a way that preserves consistency with the previous orientations. In particular, if we consider the underlying graph induced by the resulting grid configuration $c'$, it must coincide locally with the pattern of neighbors $y$ of $x$ such that $c(y)=0$.

    \item Every other cell on the grid has value 1.
\end{itemize}

We then define
\[
\mathcal{F}_x(c) = f_2'(c').
\]
In this way, the underlying automata network mimics the behavior of the cellular automaton associated with the fungal automaton, as if the dynamics were taking place on a graph rather than on the regular grid. The construction of the configuration $c'$ is illustrated in \Cref{fig:underlyngan}.
\end{definition}

Although multiple symmetric embeddings of the local neighborhood of $x$ into the $3\times 3$ grid are possible, the resulting updates coincide. This is because the rule is derived from a one-dimensional totalistic rule applied along the horizontal and vertical directions. Hence, the construction is well defined. See \Cref{fig:underlyngan}.

\begin{figure}[t]
\begin{minipage}[t][0.2\textheight][c]{0.5\textwidth}
\centering
\centering\resizebox{0.5\textwidth}{!}{\tikzset{every picture/.style={line width=0.75pt}} 

\begin{tikzpicture}[x=0.75pt,y=0.75pt,yscale=-1,xscale=1]

\draw [color={rgb, 255:red, 65; green, 117; blue, 5 }  ,draw opacity=1 ][line width=1.5]    (60,50) -- (180,170) ;
\draw [color={rgb, 255:red, 65; green, 117; blue, 5 }  ,draw opacity=1 ][line width=1.5]    (180,110) -- (120,50) ;
\draw [color={rgb, 255:red, 0; green, 0; blue, 0 }  ,draw opacity=1 ][line width=1.5]    (180,110) -- (180,170) ;
\draw [color={rgb, 255:red, 65; green, 117; blue, 5 }  ,draw opacity=1 ][line width=1.5]    (120,170) -- (180,110) ;
\draw [color={rgb, 255:red, 0; green, 0; blue, 0 }  ,draw opacity=1 ][line width=1.5]    (60,110) -- (180,110) ;
\draw [color={rgb, 255:red, 0; green, 0; blue, 0 }  ,draw opacity=1 ][line width=1.5]    (120,170) -- (180,170) ;
\draw [color={rgb, 255:red, 65; green, 117; blue, 5 }  ,draw opacity=1 ][line width=1.5]    (120,50) -- (60,110) ;
\draw [color={rgb, 255:red, 65; green, 117; blue, 5 }  ,draw opacity=1 ][line width=1.5]    (60,110) -- (120,170) ;
\draw [color={rgb, 255:red, 0; green, 0; blue, 0 }  ,draw opacity=1 ][line width=1.5]    (60,50) -- (60,110) ;
\draw [color={rgb, 255:red, 0; green, 0; blue, 0 }  ,draw opacity=1 ][line width=1.5]    (120,50.87) -- (120,170) ;
\draw [color={rgb, 255:red, 0; green, 0; blue, 0 }  ,draw opacity=1 ][line width=1.5]    (60,50) -- (120,50) ;
\draw  [fill={rgb, 255:red, 255; green, 255; blue, 255 }  ,fill opacity=1 ] (50,50) .. controls (50,44.48) and (54.48,40) .. (60,40) .. controls (65.52,40) and (70,44.48) .. (70,50) .. controls (70,55.52) and (65.52,60) .. (60,60) .. controls (54.48,60) and (50,55.52) .. (50,50) -- cycle ;
\draw  [fill={rgb, 255:red, 255; green, 255; blue, 255 }  ,fill opacity=1 ] (50,110) .. controls (50,104.48) and (54.48,100) .. (60,100) .. controls (65.52,100) and (70,104.48) .. (70,110) .. controls (70,115.52) and (65.52,120) .. (60,120) .. controls (54.48,120) and (50,115.52) .. (50,110) -- cycle ;
\draw  [fill={rgb, 255:red, 255; green, 255; blue, 255 }  ,fill opacity=1 ] (110,50) .. controls (110,44.48) and (114.48,40) .. (120,40) .. controls (125.52,40) and (130,44.48) .. (130,50) .. controls (130,55.52) and (125.52,60) .. (120,60) .. controls (114.48,60) and (110,55.52) .. (110,50) -- cycle ;
\draw  [fill={rgb, 255:red, 255; green, 255; blue, 255 }  ,fill opacity=1 ] (110,110) .. controls (110,104.48) and (114.48,100) .. (120,100) .. controls (125.52,100) and (130,104.48) .. (130,110) .. controls (130,115.52) and (125.52,120) .. (120,120) .. controls (114.48,120) and (110,115.52) .. (110,110) -- cycle ;
\draw  [fill={rgb, 255:red, 255; green, 255; blue, 255 }  ,fill opacity=1 ] (110,170) .. controls (110,164.48) and (114.48,160) .. (120,160) .. controls (125.52,160) and (130,164.48) .. (130,170) .. controls (130,175.52) and (125.52,180) .. (120,180) .. controls (114.48,180) and (110,175.52) .. (110,170) -- cycle ;
\draw  [fill={rgb, 255:red, 255; green, 255; blue, 255 }  ,fill opacity=1 ] (170,170) .. controls (170,164.48) and (174.48,160) .. (180,160) .. controls (185.52,160) and (190,164.48) .. (190,170) .. controls (190,175.52) and (185.52,180) .. (180,180) .. controls (174.48,180) and (170,175.52) .. (170,170) -- cycle ;
\draw  [fill={rgb, 255:red, 255; green, 255; blue, 255 }  ,fill opacity=1 ] (170,110) .. controls (170,104.48) and (174.48,100) .. (180,100) .. controls (185.52,100) and (190,104.48) .. (190,110) .. controls (190,115.52) and (185.52,120) .. (180,120) .. controls (174.48,120) and (170,115.52) .. (170,110) -- cycle ;

\draw (115,105.4) node [anchor=north west][inner sep=0.75pt] {$x$};
\draw (115,163.4) node [anchor=north west][inner sep=0.75pt] {$1$};
\draw (175,104.4) node [anchor=north west][inner sep=0.75pt] {$0$};
\draw (175,164.4) node [anchor=north west][inner sep=0.75pt] {$0$};
\draw (55,104.4) node [anchor=north west][inner sep=0.75pt] {$0$};
\draw (55,44.4) node [anchor=north west][inner sep=0.75pt] {$0$};
\draw (115,44.4) node [anchor=north west][inner sep=0.75pt] {$0$};
\end{tikzpicture}}\\
\end{minipage}
\begin{minipage}[t][0.2\textheight][c]{0.5\textwidth}
\centering
\centering\resizebox{0.5\textwidth}{!}{\begin{tikzpicture}[
  every node/.append style={font=\relsize{+3}}
]
    
    \draw[line width=0.1mm] (0, 0) grid (3, 3);
    \setcounter{cols}{0}
    \setcounter{rows}{3}
    \setcounter{row}{0}
    \setrow {0,0,1}
    \setrow {0,$c(x)$,0}
    \setrow {1,1,0}

    \draw[line width=0.1mm] (4, 0) grid (7, 3);
    \setcounter{cols}{4}
    \setcounter{rows}{3}
    \setcounter{row}{0}
    \setrow {1,0,0}
    \setrow {0,$c(x)$,0}
    \setrow {0,1,1}

    \draw[line width=0.1mm] (8, 0) grid (11, 3);
    \setcounter{cols}{8}
    \setcounter{rows}{3}
    \setcounter{row}{0}
    \setrow {0,0,0}
    \setrow {0,$c(x)$,0}
    \setrow {1,1,1}

\end{tikzpicture}}
\end{minipage}
\caption{Example of the grid constructed for a node $x$ in the definition of the local update function. The graph (left) shows the node $x$ together with its neighbors and their values in the configuration $c$, with edges oriented according to whether they are horizontal or vertical. The three $3\times 3$ grids (right) illustrate possible configurations $g$ induced by this local structure. The first two grids satisfy the definition and yield the same value under $f_2'$, showing that the construction is consistent. The last grid is not valid, as its underlying graph would induce a local structure different form that of the original graph.}
\label{fig:underlyngan}
\end{figure}
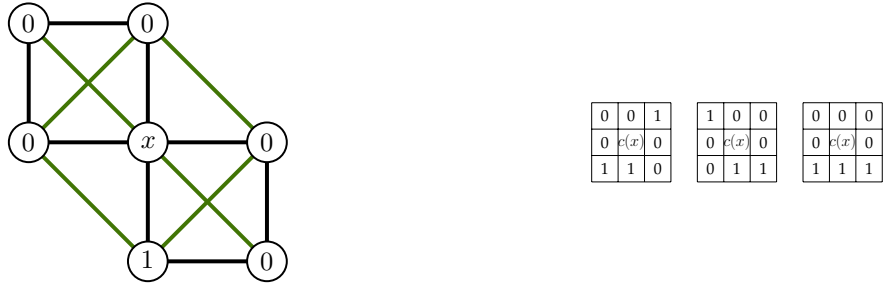

\begin{lemma}
    For underlying graph of bounded treewidth, the prediction problem \Cref{ANpred} for the underlying automata network lies in \textbf{NC}.
\end{lemma}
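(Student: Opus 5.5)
The plan is to apply \Cref{teof2} directly to the underlying automata network. That theorem requires three things: the network is freezing, its graph has bounded treewidth, and its graph has bounded degree. Treewidth is bounded by hypothesis, so two properties remain to check. We must also make sure that building the network from the input rectangle can be done inside \textbf{NC}, so that composing it with the algorithm of \cite{paramcom} keeps the problem in \textbf{NC}.

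First, bounded degree. Every vertex $x$ of the underlying graph is a cell of $\mathbb{Z}^2$. Its $E_H$ neighbors are among its two horizontal grid neighbors, and its $E_V$ neighbors are among its two vertical ones. Its $E_D$ neighbors are cells $y$ for which some $z$ is a horizontal neighbor of $x$ and a vertical neighbor of $y$, or the reverse. Each such $y$ lies in the Moore neighborhood of $x$. So every vertex has degree at most $8$, a constant independent of the input.

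Second, the freezing property. $f_2$ is a freezing rule, so $F_H$ and $F_V$ never turn a $1$ into a $0$. Hence the composition $F_V\circ F_H$, whose local rule is $f_2'$, is freezing as well. The local function $\mathcal{F}_x$ evaluates $f_2'$ on a $3\times 3$ pattern $c'$ whose center is $c(x)$. Therefore $c(x)\le \mathcal{F}_x(c)$ for the order $0\le 1$, and the network is freezing.

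We also need the network to mimic the CA $(\{0,1\},M,f_2')$. Cells in state $1$ never change and influence their neighbors only as $1$s, which is exactly how the construction of $c'$ fills absent positions. So a cell reaches state $1$ in the fungal automaton (at even times, which suffices because the system is freezing) if and only if the corresponding node does in the network. This is where the definition of $E_D$ matters: the state of $x$ after one $HV$ round depends only on cells reachable by a horizontal step followed by a vertical step, and these are exactly the $E_D$ edges. The main obstacle is making this correspondence precise, in particular that the diagonal positions of $c'$ can always be filled consistently and that every valid filling gives the same value of $f_2'$. I would handle this by noting that $F_V(F_H(c))(x)$ depends on the diagonal cells only through the horizontal updates of the vertical neighbors of $x$. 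It therefore reads only diagonal cells joined to $x$ through a vertical neighbor, which is what $E_D$ records. Any two admissible embeddings then differ by a reflection that the totalistic rule ignores.

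Finally, constructibility. Each vertex, each edge and each local function (a table of size $O(1)$) depends only on a constant-size window of the rectangle, so the network is computable in $\mathbf{AC}^0\subseteq\textbf{NC}$ and the target node is the target cell itself. Invoking \Cref{teof2} then places the prediction problem for the underlying automata network in \textbf{NC}.
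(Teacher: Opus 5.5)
Your proposal is correct and follows the paper's route. The paper's proof is a one-line direct application of \Cref{teof2}, citing exactly the two properties you verify: bounded degree (at most $8$ here) and the freezing property inherited from $f_2$ through $f_2'$. Your extra discussion of how the network simulates $F_V\circ F_H$, and of how the network is constructed from the rectangle, is not needed for the lemma as stated, since the lemma concerns only the prediction problem for the underlying network itself. It is nonetheless consistent with the paper's intent and harmless.
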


\begin{proof}
    This is due a directed application of \Cref{teof2} since the underlying automata network has bounded degree and is freezing.
\end{proof}

\section{Fungal freezing majority rule of radius 1.5}
\label{s:maj-radius1.5}

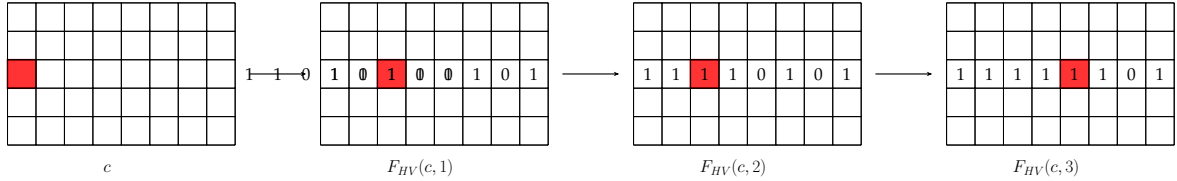
\begin{figure}[t]
\centering
\resizebox{\textwidth}{!}{%
\begin{tikzpicture}[
  every node/.append style={font=\relsize{+3}}
]
\draw[draw=red, fill=red, fill opacity=0.8, line width=0.6mm]
  (0,12) rectangle (1,13);

\draw[draw=red, fill=red, fill opacity=0.8, line width=0.6mm]
  (13,12) rectangle (14,13);

\draw[draw=red, fill=red, fill opacity=0.8, line width=0.6mm]
  (24,12) rectangle (25,13);

\draw[draw=red, fill=red, fill opacity=0.8, line width=0.6mm]
  (37,12) rectangle (38,13);
  
    \draw[line width=0.1mm] (0, 10) grid (8, 15);

    \setcounter{rows}{15}
    \setcounter{row}{0}
    \setrow { , , , , }
    \setrow { , , , , }
    \setrow {1,1,0,1,0,1,0,1}
    \setrow { , , , , };
    
    \draw[line width=0.1mm] (11, 10) grid (19, 15);

    \setcounter{cols}{11}
    \setcounter{rows}{15}
    \setcounter{row}{0}
    \setrow { , , , , }
    \setrow { , , , , }
    \setrow {1,1,1,1,0,1,0,1}
    \setrow { , , , , };
    
    \draw[line width=0.1mm] (22, 10) grid (30, 15);

    \setcounter{cols}{22}
    \setcounter{rows}{15}
    \setcounter{row}{0}
    \setrow { , , , , }
    \setrow { , , , , }
    \setrow {1,1,1,1,0,1,0,1}
    \setrow { , , , , };

    \draw[line width=0.1mm] (33, 10) grid (41, 15);

    \setcounter{cols}{33}
    \setcounter{rows}{15}
    \setcounter{row}{0}
    \setrow { , , , , }
    \setrow { , , , , }
    \setrow {1,1,1,1,1,1,0,1}
    \setrow { , , , , };

\draw[line width=0.1mm] (0, 10) grid (8, 15);
\draw[line width=0.1mm] (11, 10) grid (19, 15);
\draw[line width=0.1mm] (22, 10) grid (30, 15);
\draw[line width=0.1mm] (33, 10) grid (41, 15);

\draw[-{Stealth[length=2mm]}, line width=0.4mm]
  (8.5,12.5) -- (10.5,12.5);

\draw[-{Stealth[length=2mm]}, line width=0.4mm]
  (19.5,12.5) -- (21.5,12.5);

\draw[-{Stealth[length=2mm]}, line width=0.4mm]
  (30.5,12.5) -- (32.5,12.5);


\node at (3.5,9.2) {$c$};
\node at (14.5,9.2) {$F_{HV}(c,1)$};
\node at (25.5,9.2) {$F_{HV}(c,2)$};
\node at (36.5,9.2) {$F_{HV}(c,3)$};
\setcounter{cols}{0}

\end{tikzpicture}
}
\caption{Wire gadget represented in configuration $c$, where unlabeled cells are assumed to have value $0$. In red, we highlight the cell containing an additional \(1\) with respect to the original wire configuration, in order to denote the \emph{propagation} of the \emph{signal} along the wire.}
\label{fig:wireresad}
\end{figure}

\begin{figure}[t]
\centering
\resizebox{0.3\textwidth}{!}{\begin{tikzpicture}[
  every node/.append style={font=\relsize{+3}}
]
\draw[line width=0.1mm] (0, 10) grid (8, 16);

    \setcounter{rows}{16}
    \setcounter{row}{0}
    \setrow { , , , , }
    \setrow {0,1,0,1,0,1,0,1};
    \setrow { , , , , };
    \setrow { , , , , };
    \setrow {1,0,1,0,1,0,1,0}

\end{tikzpicture}}
\caption{Two wires of different parity.
Note that all cells with value $1$ both wires share the same parity of the sum of their coordinates (either even or odd), while they are in rows of different parity}
\label{fig:paritireal}
\end{figure}

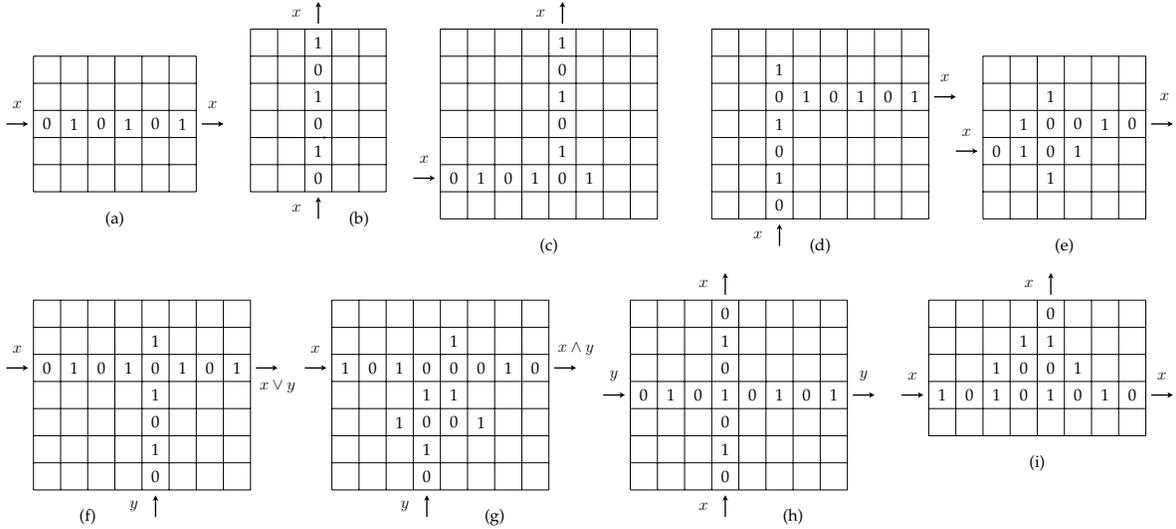
\begin{figure}[t]
\centering
\resizebox{\textwidth}{!}{%
\begin{tikzpicture}[
  every node/.append style={font=\relsize{+3}}
]

    \draw[-{Stealth[width=2mm,length=2mm]}, line width=0.5mm] (-1,12.5) -- (-0.2,12.5);
    \draw[-{Stealth[width=2mm,length=2mm]}, line width=0.5mm] (6.2,12.5) -- (7,12.5);
    
    \draw[-{Stealth[width=2mm,length=2mm]}, line width=0.5mm] (10.5,9) -- (10.5,9.8);
    \draw[-{Stealth[width=2mm,length=2mm]}, line width=0.5mm] (10.5,16.2) -- (10.5,17);

    \draw[-{Stealth[width=2mm,length=2mm]}, line width=0.5mm] (14,10.5) -- (14.8,10.5);
    \draw[-{Stealth[width=2mm,length=2mm]}, line width=0.5mm] (19.5,16.2) -- (19.5,17);

    \draw[-{Stealth[width=2mm,length=2mm]}, line width=0.5mm] (27.5,8) -- (27.5,8.8);
    \draw[-{Stealth[width=2mm,length=2mm]}, line width=0.5mm] (33.2,13.5) -- (34,13.5);

    \draw[-{Stealth[width=2mm,length=2mm]}, line width=0.5mm] (34,11.5) -- (34.8,11.5);
    \draw[-{Stealth[width=2mm,length=2mm]}, line width=0.5mm] (41.2,12.5) -- (42,12.5);


    \draw[-{Stealth[width=2mm,length=2mm]}, line width=0.5mm] (-1,3.5) -- (-0.2,3.5);
    \draw[-{Stealth[width=2mm,length=2mm]}, line width=0.5mm] (10,3.5) -- (10.8,3.5);
    \draw[-{Stealth[width=2mm,length=2mm]}, line width=0.5mm] (21,2.5) -- (21.8,2.5);

    \draw[-{Stealth[width=2mm,length=2mm]}, line width=0.5mm] (8.2,3.5) -- (9,3.5);
    \draw[-{Stealth[width=2mm,length=2mm]}, line width=0.5mm] (19.2,3.5) -- (20,3.5);
    \draw[-{Stealth[width=2mm,length=2mm]}, line width=0.5mm] (30.2,2.5) -- (31,2.5);

    \draw[-{Stealth[width=2mm,length=2mm]}, line width=0.5mm] (41.2,2.5) -- (42,2.5);
    \draw[-{Stealth[width=2mm,length=2mm]}, line width=0.5mm] (32,2.5) -- (32.8,2.5);
    \node at (32.4,3.2) {$x$};
    \node at (41.6,3.2) {$x$};
    
    \draw[-{Stealth[width=2mm,length=2mm]}, line width=0.5mm] (4.5,-2) -- (4.5,-1.2);
    \draw[-{Stealth[width=2mm,length=2mm]}, line width=0.5mm] (14.5,-2) -- (14.5,-1.2);
    \draw[-{Stealth[width=2mm,length=2mm]}, line width=0.5mm] (25.5,-2) -- (25.5,-1.2);

    \draw[-{Stealth[width=2mm,length=2mm]}, line width=0.5mm] (25.5,6.2) -- (25.5,7);
    \draw[-{Stealth[width=2mm,length=2mm]}, line width=0.5mm] (37.5,6.2) -- (37.5,7);

\node at (-0.6,13.2) {$x$};
\node at (6.6,13.2) {$x$};

\node at (9.7,9.4) {$x$};
\node at (9.7,16.6) {$x$};

\node at (14.4,11.2) {$x$};
\node at (18.7,16.6) {$x$};

\node at (26.7,8.4) {$x$};
\node at (33.6,14.2) {$x$};

\node at (34.4,12.2) {$x$};
\node at (41.7,13.5) {$x$};


\node at (-0.6,4.2) {$x$};
\node at (10.4,4.2) {$x$};
\node at (21.4,3.2) {$y$};

\node at (9,2.8) {$x\lor y$};
\node at (20,4.2) {$x\land y$};
\node at (30.6,3.2) {$y$};

\node at (3.7,-1.6) {$y$};
\node at (13.7,-1.6) {$y$};
\node at (24.7,-1.6) {$x$};

\node at (24.7,6.6) {$x$};
\node at (36.7,6.6) {$x$};

    \draw[line width=0.1mm] (0, 10) grid (6, 15);

    \setcounter{rows}{15}
    \setcounter{row}{0}
    \setrow { , , , , }
    \setrow { , , , , }
    \setrow {0,1,0,1,0,1, }
    \setrow { , , , , };
    
    \draw[- stealth] (8.2,12) -- (10.8,12);

    \setcounter{cols}{7}
    \setcounter{rows}{16}
    \setcounter{row}{0}
    \setrow { , , ,1, }
    \setrow { , , ,0, }
    \setrow { , , ,1, }
    \setrow { , , ,0, }
    \setrow { , , ,1, }
    \setrow { , , ,0, }
;

    \draw[line width=0.1mm] (8, 10) grid (13, 16);

    \draw[line width=0.1mm] (15, 9) grid (23, 16);

    \setcounter{cols}{15}
    \setcounter{rows}{16}
    \setcounter{row}{0}
    \setrow { , , , ,1}
    \setrow { , , , ,0}
    \setrow { , , , ,1}
    \setrow { , , , ,0}
    \setrow { , , , ,1}
    \setrow {0,1,0,1,0,1}
    \setrow { , , , , , };

    \draw[line width=0.1mm] (25, 9) grid (33, 16);

    \setcounter{cols}{25}
    \setcounter{rows}{16}
    \setcounter{row}{0}
    \setrow { , , , , }
    \setrow { , ,1, , }
    \setrow { , ,0,1,0,1,0,1}
    \setrow { , ,1}
    \setrow { , ,0}
    \setrow { , ,1}
    \setrow { , ,0};

    \draw[line width=0.1mm] (35, 9) grid (41, 15);

    \setcounter{cols}{35}
    \setcounter{rows}{16}
    \setcounter{row}{0}
    \setrow { , , , , }
    \setrow { , }
    \setrow { , ,1,}
    \setrow { ,1,0,0,1,0}
    \setrow {0,1,0,1}
    \setrow { , ,1};

    \draw[line width=0.1mm] (0, -1) grid (8, 6);

    \setcounter{cols}{33}
    \setcounter{rows}{6}
    \setcounter{row}{0}
    \setrow { , , , ,0, }    
    \setrow { , , ,1,1, }
    \setrow { , ,1,0,0,1}
    \setrow {1,0,1,0,1,0,1,0};

    \draw[line width=0.1mm] (11, -1) grid (19, 6);

    \setcounter{cols}{22}
    \setcounter{rows}{6}
    \setcounter{row}{0}
    \setrow { , , ,0, }
    \setrow { , , ,1}
    \setrow {  , , ,0}
    \setrow {0,1,0,1,0,1,0,1}
    \setrow {  , , ,0}
    \setrow {  , , ,1}
    \setrow {  , , ,0};

    \draw[line width=0.1mm] (22, -1) grid (30, 6);

    \setcounter{cols}{11}
    \setcounter{rows}{6}
    \setcounter{row}{0}
    \setrow { , , , }
    \setrow { , , , ,1}
    \setrow {1,0,1,0,0,0,1,0}
    \setrow { , , ,1,1, , , }
    \setrow { , ,1,0,0,1, , }
    \setrow { , , ,1, , , , }
    \setrow { , , ,0};

    \draw[line width=0.1mm] (33, 1) grid (41, 6);

    \setcounter{cols}{0}
    \setcounter{rows}{6}
    \setcounter{row}{0}
    \setrow { , , , , }
    \setrow { , , , ,1}
    \setrow {0,1,0,1,0,1,0,1}
    \setrow { , , , ,1}
    \setrow { , , , ,0}
    \setrow { , , , ,1}
    \setrow { , , , ,0};

\setcounter{cols}{0}

\node at (3,9) {(a)};
\node at (12,9) {(b)};

\node at (19,8) {(c)};
\node at (29,8) {(d)};
\node at (38,8) {(e)};

\node at (2,-2) {(f)};
\node at (17,-2) {(g)};
\node at (28,-2) {(h)};
\node at (37,0) {(i)};

\end{tikzpicture}
}
\caption{In the top row, we define the following gadgets: (a) the \emph{horizontal wire}, which moves bits from left to right (see \Cref{fig:wireresad}), (b) the \emph{vertical wire}, which moves bits from bottom to top, (c,d) the \emph{turns}, which connect wires from different directions, and (e) the \emph{coordinator wire}, which changes the parity of a wire (See \Cref{fig:paritireal}). In the bottom row, from left to right, we define the gadgets for  \emph{AND} (f), which computes the $\land$ of two bits,  \emph{OR} (g), which computes the $\lor$ of two bits, the \emph{crossing} (h), which allows bits to cross, and the \emph{duplicator} (i), which duplicates bits. In the figure, each gadget is shown together with its input bits, represented by $x$ or $y$.}
\label{fig:gadgetdresad}
\end{figure}

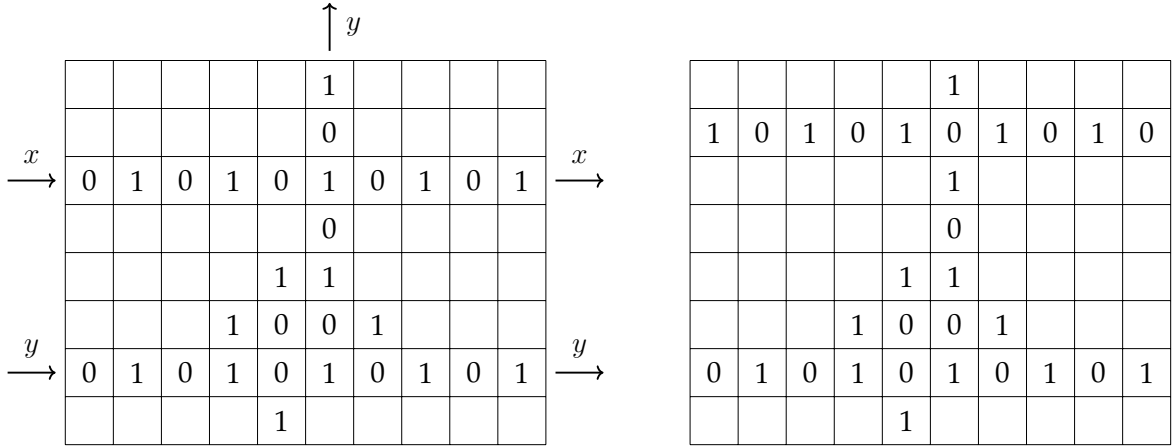
\begin{figure}[t]
\centering
\resizebox{\textwidth}{!}{%
\begin{tikzpicture}[
  every node/.append style={font=\relsize{+3}}
]
\draw[->, line width=0.4mm] (-1.2,5.5) -- (-0.2,5.5);
\node at (-0.7,6.0) {$x$};

\draw[->, line width=0.4mm] (-1.2,1.5) -- (-0.2,1.5);
\node at (-0.7,2.0) {$y$};

\draw[->, line width=0.4mm] (10.2,5.5) -- (11.2,5.5);
\node at (10.7,6.0) {$x$};

\draw[->, line width=0.4mm] (5.5,8.2) -- (5.5,9.2);
\node at (6.0,8.7) {$y$};

\draw[->, line width=0.4mm] (10.2,1.5) -- (11.2,1.5);
\node at (10.7,2.0) {$y $};





    \draw[line width=0.1mm] (0, 0) grid (10, 8);

    \setcounter{rows}{8}
    \setcounter{row}{0}

    \setrow { , , , , ,1, , };
    \setrow { , , , , ,0, , };
    \setrow {0,1,0,1,0,1,0,1,0,1};
    \setrow { , , , , ,0, , ,};
    \setrow { , , , ,1,1, , ,};
    \setrow { , , ,1,0,0,1, ,};
    \setrow {0,1,0,1,0,1,0,1,0,1};
    \setrow { , , , ,1, ,, ,};
    \setrow { , , , , , ,, , };

    \draw[line width=0.1mm] (13, 0) grid (23, 8);

    \setcounter{cols}{13}
    \setcounter{rows}{8}
    \setcounter{row}{0}
    \setrow { , , , , ,1, };
    \setrow {1,0,1,0,1,0,1,0,1,0};
    \setrow { , , , , ,1, , ,};
    \setrow { , , , , ,0, , ,};
    \setrow { , , , ,1,1, , ,};
    \setrow { , , ,1,0,0,1, ,};
    \setrow {0,1,0,1,0,1,0,1,0,1};
    \setrow { , , , ,1, ,, ,};
    \setrow { , , , , , ,, ,};
    \setrow { , , , , , ,, , };

\end{tikzpicture}
}
\caption{Crossing of wires. The figure shows two attempts of wire crossings. On the left, wires with even distance are crossed without issues with the gadget previously defined, since the propagation of a signal $x$ or $y$ along either wire will not interfere with the correct behavior of the other. On the right, we illustrate that wires with odd distance do not admit a clean crossing.}
\label{fig:pariti}
\end{figure}

\begin{figure}[t]
\centering
\resizebox{\textwidth}{!}{%
\tikzset{every picture/.style={line width=0.75pt}} 

\begin{tikzpicture}[x=0.75pt,y=0.75pt,yscale=-1,xscale=1]

\draw [color={rgb, 255:red, 74; green, 144; blue, 226 }  ,draw opacity=1 ]   (160,250) -- (177,250) ;
\draw [shift={(180,250)}, rotate = 180] [fill={rgb, 255:red, 74; green, 144; blue, 226 }  ,fill opacity=1 ][line width=0.08]  [draw opacity=0] (5.36,-2.57) -- (0,0) -- (5.36,2.57) -- cycle    ;
\draw  [color={rgb, 255:red, 0; green, 0; blue, 0 }  ,draw opacity=1 ] (280,160) -- (320,160) -- (320,210) -- (280,210) -- cycle ;
\draw [color={rgb, 255:red, 208; green, 2; blue, 27 }  ,draw opacity=1 ]   (190,177) -- (190,260) ;
\draw [color={rgb, 255:red, 74; green, 144; blue, 226 }  ,draw opacity=1 ]   (200,187) -- (200,220) ;
\draw [color={rgb, 255:red, 74; green, 144; blue, 226 }  ,draw opacity=1 ]   (200,187) -- (277,187) ;
\draw [shift={(280,187)}, rotate = 180] [fill={rgb, 255:red, 74; green, 144; blue, 226 }  ,fill opacity=1 ][line width=0.08]  [draw opacity=0] (5.36,-2.57) -- (0,0) -- (5.36,2.57) -- cycle    ;
\draw [color={rgb, 255:red, 208; green, 2; blue, 27 }  ,draw opacity=1 ]   (190,177) -- (277,177) ;
\draw [shift={(280,177)}, rotate = 180] [fill={rgb, 255:red, 208; green, 2; blue, 27 }  ,fill opacity=1 ][line width=0.08]  [draw opacity=0] (5.36,-2.57) -- (0,0) -- (5.36,2.57) -- cycle    ;
\draw [color={rgb, 255:red, 208; green, 2; blue, 27 }  ,draw opacity=1 ]   (160,260) -- (177,260) ;
\draw [shift={(180,260)}, rotate = 180] [fill={rgb, 255:red, 208; green, 2; blue, 27 }  ,fill opacity=1 ][line width=0.08]  [draw opacity=0] (5.36,-2.57) -- (0,0) -- (5.36,2.57) -- cycle    ;
\draw [color={rgb, 255:red, 208; green, 2; blue, 27 }  ,draw opacity=1 ]   (160,300) -- (177,300) ;
\draw [shift={(180,300)}, rotate = 180] [fill={rgb, 255:red, 208; green, 2; blue, 27 }  ,fill opacity=1 ][line width=0.08]  [draw opacity=0] (5.36,-2.57) -- (0,0) -- (5.36,2.57) -- cycle    ;
\draw [color={rgb, 255:red, 74; green, 144; blue, 226 }  ,draw opacity=1 ]   (180,220) -- (330,220) ;
\draw [color={rgb, 255:red, 74; green, 144; blue, 226 }  ,draw opacity=1 ]   (180,250) -- (350,250) ;
\draw  [color={rgb, 255:red, 74; green, 144; blue, 226 }  ,draw opacity=1 ][fill={rgb, 255:red, 255; green, 255; blue, 255 }  ,fill opacity=1 ] (195,220) .. controls (195,217.24) and (197.24,215) .. (200,215) .. controls (202.76,215) and (205,217.24) .. (205,220) .. controls (205,222.76) and (202.76,225) .. (200,225) .. controls (197.24,225) and (195,222.76) .. (195,220) -- cycle ;
\draw [color={rgb, 255:red, 208; green, 2; blue, 27 }  ,draw opacity=1 ]   (180,260) -- (360,260) ;
\draw [color={rgb, 255:red, 208; green, 2; blue, 27 }  ,draw opacity=1 ]   (180,300) -- (380,300) ;
\draw [color={rgb, 255:red, 74; green, 144; blue, 226 }  ,draw opacity=1 ]   (330,220) -- (330,30) ;
\draw [color={rgb, 255:red, 74; green, 144; blue, 226 }  ,draw opacity=1 ]   (350,250) -- (350,60) ;
\draw [color={rgb, 255:red, 208; green, 2; blue, 27 }  ,draw opacity=1 ]   (360,260) -- (360,70) ;
\draw [color={rgb, 255:red, 208; green, 2; blue, 27 }  ,draw opacity=1 ]   (300,160) -- (300,100) ;
\draw [color={rgb, 255:red, 208; green, 2; blue, 27 }  ,draw opacity=1 ]   (380,300) -- (380,90) ;
\draw [color={rgb, 255:red, 74; green, 144; blue, 226 }  ,draw opacity=1 ]   (160,220) -- (177,220) ;
\draw [shift={(180,220)}, rotate = 180] [fill={rgb, 255:red, 74; green, 144; blue, 226 }  ,fill opacity=1 ][line width=0.08]  [draw opacity=0] (5.36,-2.57) -- (0,0) -- (5.36,2.57) -- cycle    ;
\draw [color={rgb, 255:red, 74; green, 144; blue, 226 }  ,draw opacity=1 ]   (330,30) -- (427,30) ;
\draw [shift={(430,30)}, rotate = 180] [fill={rgb, 255:red, 74; green, 144; blue, 226 }  ,fill opacity=1 ][line width=0.08]  [draw opacity=0] (5.36,-2.57) -- (0,0) -- (5.36,2.57) -- cycle    ;
\draw [color={rgb, 255:red, 74; green, 144; blue, 226 }  ,draw opacity=1 ]   (350,60) -- (427,60) ;
\draw [shift={(430,60)}, rotate = 180] [fill={rgb, 255:red, 74; green, 144; blue, 226 }  ,fill opacity=1 ][line width=0.08]  [draw opacity=0] (5.36,-2.57) -- (0,0) -- (5.36,2.57) -- cycle    ;
\draw [color={rgb, 255:red, 208; green, 2; blue, 27 }  ,draw opacity=1 ]   (360,70) -- (427,70) ;
\draw [shift={(430,70)}, rotate = 180] [fill={rgb, 255:red, 208; green, 2; blue, 27 }  ,fill opacity=1 ][line width=0.08]  [draw opacity=0] (5.36,-2.57) -- (0,0) -- (5.36,2.57) -- cycle    ;
\draw [color={rgb, 255:red, 208; green, 2; blue, 27 }  ,draw opacity=1 ]   (300,100) -- (427,100) ;
\draw [shift={(430,100)}, rotate = 180] [fill={rgb, 255:red, 208; green, 2; blue, 27 }  ,fill opacity=1 ][line width=0.08]  [draw opacity=0] (5.36,-2.57) -- (0,0) -- (5.36,2.57) -- cycle    ;
\draw [color={rgb, 255:red, 208; green, 2; blue, 27 }  ,draw opacity=1 ]   (380,90) -- (427,90) ;
\draw [shift={(430,90)}, rotate = 180] [fill={rgb, 255:red, 208; green, 2; blue, 27 }  ,fill opacity=1 ][line width=0.08]  [draw opacity=0] (5.36,-2.57) -- (0,0) -- (5.36,2.57) -- cycle    ;
\draw  [color={rgb, 255:red, 0; green, 0; blue, 0 }  ,draw opacity=1 ] (180,20) -- (420,20) -- (420,330) -- (180,330) -- cycle ;
\draw  [color={rgb, 255:red, 208; green, 2; blue, 27 }  ,draw opacity=1 ][fill={rgb, 255:red, 255; green, 255; blue, 255 }  ,fill opacity=1 ] (185,260) .. controls (185,257.24) and (187.24,255) .. (190,255) .. controls (192.76,255) and (195,257.24) .. (195,260) .. controls (195,262.76) and (192.76,265) .. (190,265) .. controls (187.24,265) and (185,262.76) .. (185,260) -- cycle ;
\draw [color={rgb, 255:red, 208; green, 2; blue, 27 }  ,draw opacity=1 ] [dash pattern={on 0.84pt off 2.51pt}]  (170,270) -- (170,290) ;
\draw [color={rgb, 255:red, 74; green, 144; blue, 226 }  ,draw opacity=1 ] [dash pattern={on 0.84pt off 2.51pt}]  (170,230) -- (170,240) ;
\draw [color={rgb, 255:red, 0; green, 0; blue, 0 }  ,draw opacity=1 ] [dash pattern={on 0.84pt off 2.51pt}]  (40,210) -- (280,210) ;
\draw [color={rgb, 255:red, 0; green, 0; blue, 0 }  ,draw opacity=1 ] [dash pattern={on 0.84pt off 2.51pt}]  (420,20) -- (420,0) ;
\draw [color={rgb, 255:red, 0; green, 0; blue, 0 }  ,draw opacity=1 ] [dash pattern={on 0.84pt off 2.51pt}]  (610,160) -- (320,160) ;
\draw [color={rgb, 255:red, 0; green, 0; blue, 0 }  ,draw opacity=1 ] [dash pattern={on 0.84pt off 2.51pt}]  (180,430) -- (180,330) ;

\draw (290,172.4) node [anchor=north west][inner sep=0.75pt]    {$G_{i}$};
\draw (141,209.4) node [anchor=north west][inner sep=0.75pt]    {$x_{1}$};
\draw (141,239.4) node [anchor=north west][inner sep=0.75pt]    {$x_{n}$};
\draw (230,82.4) node [anchor=north west][inner sep=0.75pt]    {$T_{i}$};
\draw (141,252.4) node [anchor=north west][inner sep=0.75pt]    {$g_{1}$};
\draw (131,289.4) node [anchor=north west][inner sep=0.75pt]    {$g_{i-1}$};
\draw (441,19.4) node [anchor=north west][inner sep=0.75pt]    {$x_{1}$};
\draw (441,49.4) node [anchor=north west][inner sep=0.75pt]    {$x_{n}$};
\draw (441,62.4) node [anchor=north west][inner sep=0.75pt]    {$g_{1}$};
\draw (440,79.4) node [anchor=north west][inner sep=0.75pt]    {$g_{i-1}$};
\draw (441,92.4) node [anchor=north west][inner sep=0.75pt]    {$g_{i}$};
\draw (531,82.4) node [anchor=north west][inner sep=0.75pt]    {$T_{i+1}$};
\draw (72,332.4) node [anchor=north west][inner sep=0.75pt]    {$T_{i-1}$};

\end{tikzpicture}
}
\caption{For each $i \in [m]$, each $T_i$ represents a tile in this embedding of a circuit. Each tile has a fixed size $O(m)$, which provides enough space for all wires carrying the bits corresponding to each input (depicted as blue lines) and to the outputs of all preceding gates (depicted in red). Tile $T_i$ computes the output of gate $G_i$. The figure illustrates an example in which gate $g_i$ receives the wires corresponding to the input node $x_1$ and to the gate $g_1$, meaning that, in the description of the circuit, we had the tuple $(g_i,t,x_1,g_1)$ with $t$ some logical gate. The tiles must be arranged with strictly increasing height, since wires cannot propagate signals downward. As the size of the tiles is fixed and the positions of the wires inside each tile are fixed as well, the vertical offset between consecutive tiles is constant.}
\label{fig:tilesresad}
\end{figure}
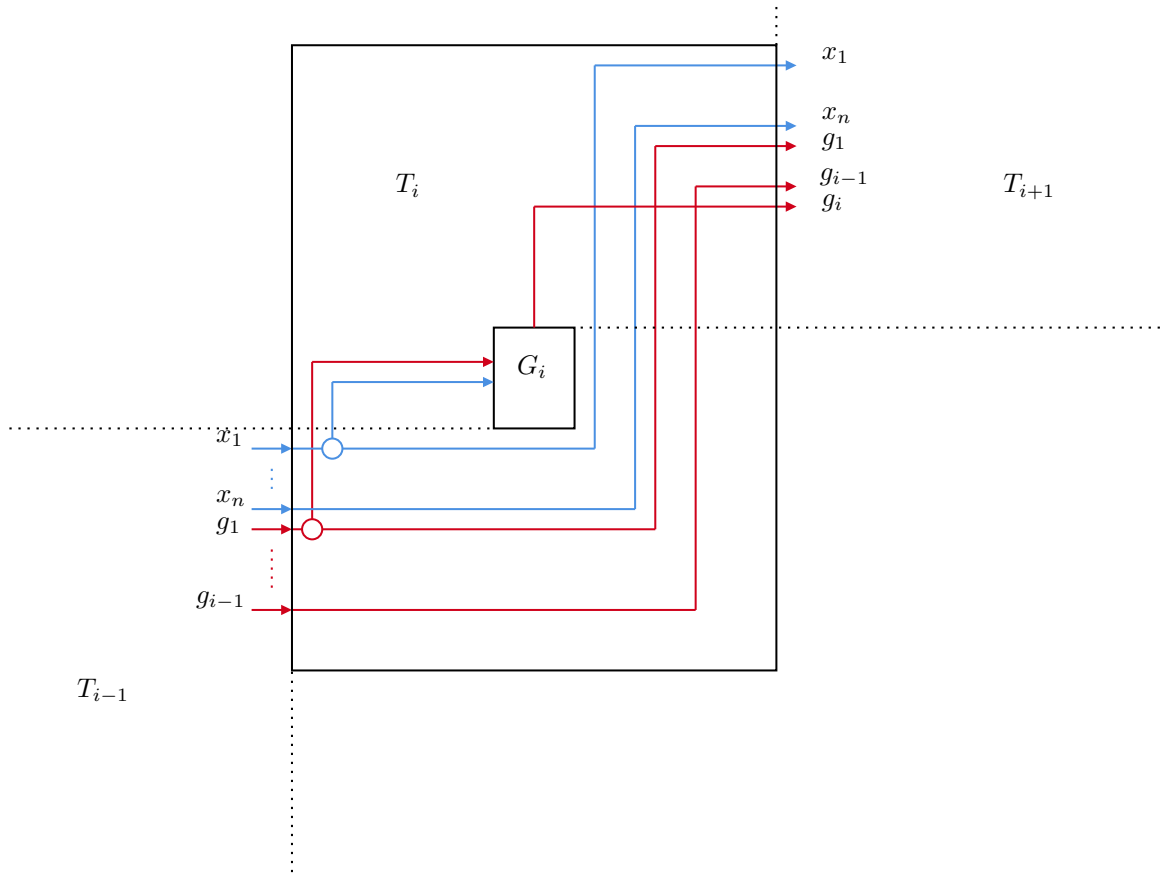

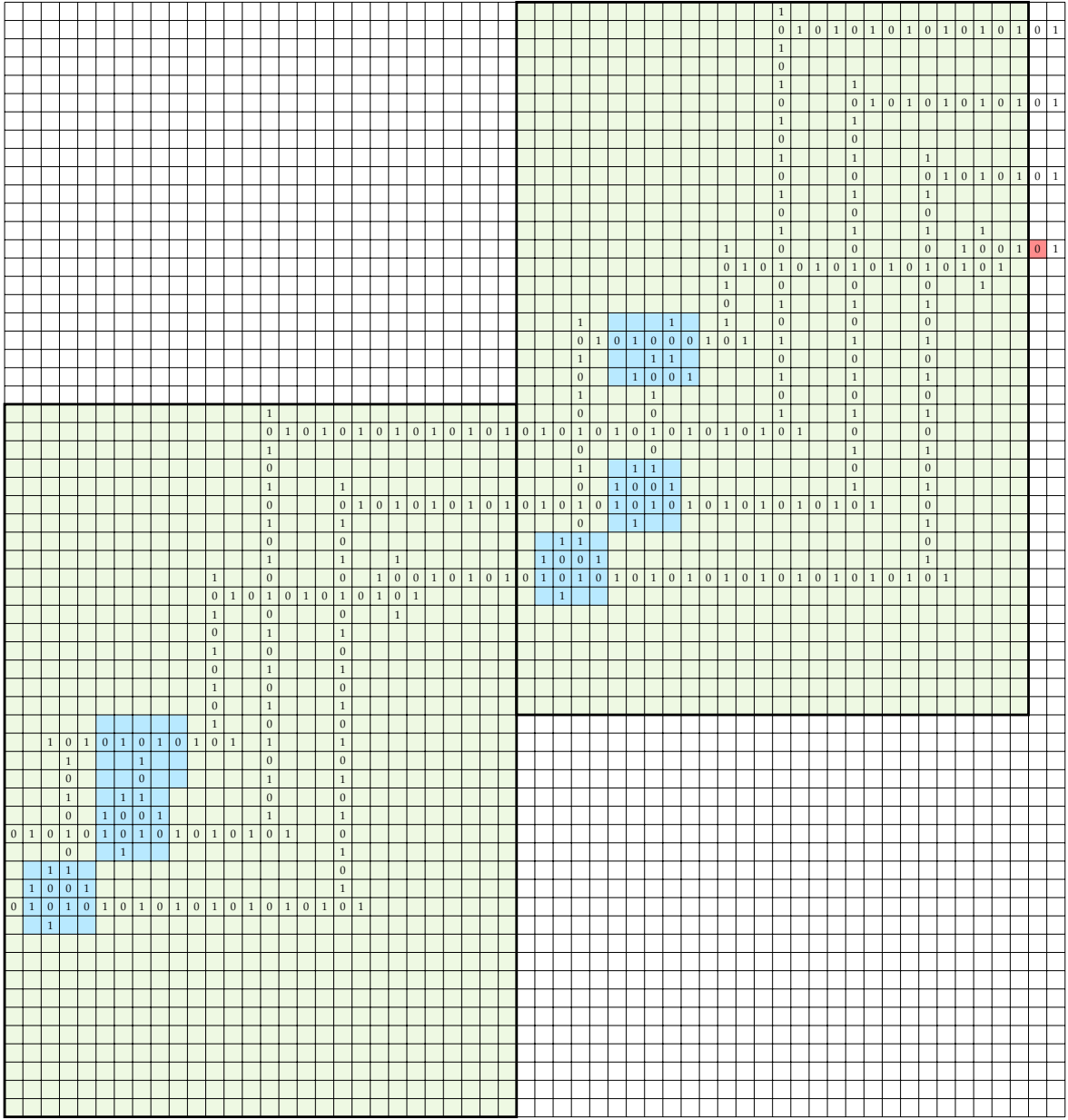
\begin{figure}[t]
\centering
\resizebox{\textwidth}{!}{%
\begin{tikzpicture}[
  every node/.append style={font=\relsize{+3}}
]

\draw[draw opacity=0,
      fill={rgb,255:red,184; green,233; blue,134},
      fill opacity=0.23]
      (0,0) rectangle (28,39);
      
\draw[draw opacity=1,
      fill={rgb,255:red,0; green,0; blue,0},
      fill opacity=0][line width=4]
      (0,0) rectangle (28,39);

\draw[draw opacity=0,
      fill={rgb,255:red,184; green,233; blue,134},
      fill opacity=0.23]
      (28,22) rectangle (56,61);

\draw[draw opacity=0,
      fill={rgb,255:red,184; green,233; blue,255},
      fill opacity=1]
      (1,10) rectangle (5,14);
      
\draw[draw opacity=0,
      fill={rgb,255:red,184; green,233; blue,255},
      fill opacity=1]
      (5,14) rectangle (9,18);

\draw[draw opacity=0,
      fill={rgb,255:red,184; green,233; blue,255},
      fill opacity=1]
      (5,18) rectangle (10,22);

\draw[draw opacity=0,
      fill={rgb,255:red,184; green,233; blue,255},
      fill opacity=1]
      (29,28) rectangle (33,32);
      
\draw[draw opacity=0,
      fill={rgb,255:red,184; green,233; blue,255},
      fill opacity=1]
      (33,32) rectangle (37,36);
\draw[draw opacity=0,
      fill={rgb,255:red,184; green,233; blue,255},
      fill opacity=1]
      (33,40) rectangle (38,44);

\draw[draw opacity=0,
      fill={rgb,255:red,255; green,140; blue,140},
      fill opacity=1]
      (56,47) rectangle (57,48);

\draw[draw opacity=1,
      fill={rgb,255:red,0; green,0; blue,0},
      fill opacity=0][line width=4]
      (28,22) rectangle (56,61);


    \draw[line width=0.01mm] (0, 0) grid (58, 61);
    
    \setcounter{cols}{0}
    \setcounter{rows}{61}
    \setcounter{row}{0}
    \setrow { , , , , , , , , , , , , , , , , , , , , , , , , , , , , , , , , , , , , , , , , , ,1, , , , , , , , , , , , };
    \setrow { , , , , , , , , , , , , , , , , , , , , , , , , , , , , , , , , , , , , , , , , , ,0,1,0,1,0,1,0,1,0,1,0,1,0,1,0,1};
    \setrow { , , , , , , , , , , , , , , , , , , , , , , , , , , , , , , , , , , , , , , , , , ,1, , , , , , , , , , , , };
    \setrow { , , , , , , , , , , , , , , , , , , , , , , , , , , , , , , , , , , , , , , , , , ,0, , , , , , , , , , , , };
    \setrow { , , , , , , , , , , , , , , , , , , , , , , , , , , , , , , , , , , , , , , , , , ,1, , , ,1, , , , , , , , };
    \setrow { , , , , , , , , , , , , , , , , , , , , , , , , , , , , , , , , , , , , , , , , , ,0, , , ,0,1,0,1,0,1,0,1,0,1,0,1};
    \setrow { , , , , , , , , , , , , , , , , , , , , , , , , , , , , , , , , , , , , , , , , , ,1, , , ,1, , , , , , , , };
    \setrow { , , , , , , , , , , , , , , , , , , , , , , , , , , , , , , , , , , , , , , , , , ,0, , , ,0, , , , , , , , };
    \setrow { , , , , , , , , , , , , , , , , , , , , , , , , , , , , , , , , , , , , , , , , , ,1, , , ,1, , , ,1, , , , };
    \setrow { , , , , , , , , , , , , , , , , , , , , , , , , , , , , , , , , , , , , , , , , , ,0, , , ,0, , , ,0,1,0,1,0,1,0,1};
    \setrow { , , , , , , , , , , , , , , , , , , , , , , , , , , , , , , , , , , , , , , , , , ,1, , , ,1, , , ,1, , , , };
    \setrow { , , , , , , , , , , , , , , , , , , , , , , , , , , , , , , , , , , , , , , , , , ,0, , , ,0, , , ,0, , , , };
    \setrow { , , , , , , , , , , , , , , , , , , , , , , , , , , , , , , , , , , , , , , , , , ,1, , , ,1, , , ,1, , ,1, };
    \setrow { , , , , , , , , , , , , , , , , , , , , , , , , , , , , , , , , , , , , , , ,1, , ,0, , , ,0, , , ,0, ,1,0,0,1,0,1};
    \setrow { , , , , , , , , , , , , , , , , , , , , , , , , , , , , , , , , , , , , , , ,0,1,0,1,0,1,0,1,0,1,0,1,0,1,0,1};
    \setrow { , , , , , , , , , , , , , , , , , , , , , , , , , , , , , , , , , , , , , , ,1, , ,0, , , ,0, , , ,0, , ,1};
    \setrow { , , , , , , , , , , , , , , , , , , , , , , , , , , , , , , , , , , , , , , ,0, , ,1, , , ,1, , , ,1, , };
    \setrow { , , , , , , , , , , , , , , , , , , , , , , , , , , , , , , ,1, , , , ,1, , ,1, , ,0, , , ,0, , , ,0, , };
    \setrow { , , , , , , , , , , , , , , , , , , , , , , , , , , , , , , ,0,1,0,1,0,0,0,1,0,1, ,1, , , ,1, , , ,1, , };
    \setrow { , , , , , , , , , , , , , , , , , , , , , , , , , , , , , , ,1, , , ,1,1, , , , , ,0, , , ,0, , , ,0, , };
    \setrow { , , , , , , , , , , , , , , , , , , , , , , , , , , , , , , ,0, , ,1,0,0,1, , , , ,1, , , ,1, , , ,1, , };
    \setrow { , , , , , , , , , , , , , , , , , , , , , , , , , , , , , , ,1, , , ,1, , , , , , ,0, , , ,0, , , ,0, , };
    \setrow { , , , , , , , , , , , , , ,1, , , , , , , , , , , , , , , , ,0, , , ,0, , , , , , ,1, , , ,1, , , ,1, , };
    \setrow { , , , , , , , , , , , , , ,0,1,0,1,0,1,0,1,0,1,0,1,0,1,0,1,0,1,0,1,0,1,0,1,0,1,0,1,0,1, , ,0, , , ,0, , , };
    \setrow { , , , , , , , , , , , , , ,1, , , , , , , , , , , , , , , , ,0, , , ,0, , , , , , , , , , ,1, , , ,1, , };
    \setrow { , , , , , , , , , , , , , ,0, , , , , , , , , , , , , , , , ,1, , ,1,1, , , , , , , , , , ,0, , , ,0, , };
    \setrow { , , , , , , , , , , , , , ,1, , , ,1, , , , , , , , , , , , ,0, ,1,0,0,1, , , , , , , , , ,1, , , ,1, , };
    \setrow { , , , , , , , , , , , , , ,0, , , ,0,1,0,1,0,1,0,1,0,1,0,1,0,1,0,1,0,1,0,1,0,1,0,1,0,1,0,1,0,1, , ,0, };
    \setrow { , , , , , , , , , , , , , ,1, , , ,1, , , , , , , , , , , , ,0, , ,1, , , , , , , , , , , , , , , ,1, , , };
    \setrow { , , , , , , , , , , , , , ,0, , , ,0, , , , , , , , , , , ,1,1, , , , , , , , , , , , , , , , , , ,0, , , };
    \setrow { , , , , , , , , , , , , , ,1, , , ,1, , ,1, , , , , , , ,1,0,0,1, , , , , , , , , , , , , , , , , ,1, , , };
    \setrow { , , , , , , , , , , ,1, , ,0, , , ,0, ,1,0,0,1,0,1,0,1,0,1,0,1,0,1,0,1,0,1,0,1,0,1,0,1,0,1,0,1,0,1,0,1};
    \setrow { , , , , , , , , , , ,0,1,0,1,0,1,0,1,0,1,0,1, , , , , , , ,1, , , , , , , , , , , ,};
    \setrow { , , , , , , , , , , ,1, , ,0, , , ,0, , ,1, , , , , , , , , , , , , , , , , , , , , ,};
    \setrow { , , , , , , , , , , ,0, , ,1, , , ,1, , , , , , , , , , , , , , , , , , , , , , , , ,};
    \setrow { , , , , , , , , , , ,1, , ,0, , , ,0, , , , , , , , , , , , , , , , , , , , , , , , ,};
    \setrow { , , , , , , , , , , ,0, , ,1, , , ,1, , , , , , , , , , , , , , , , , , , , , , , , ,};
    \setrow { , , , , , , , , , , ,1, , ,0, , , ,0, , , , , , , , , , , , , , , , , , , , , , , , ,};
    \setrow { , , , , , , , , , , ,0, , ,1, , , ,1, , , , , , , , , , , , , , , , , , , , , , , ,};
    \setrow { , , , , , , , , , , ,1, , ,0, , , ,0, , , , , , , , , , , , , , , , , , , , , , , , ,};
    \setrow { , ,1,0,1,0,1,0,1,0,1,0,1, ,1, , , ,1, , , , , , , , , , , , , , , , , , , , , , , ,};
    \setrow { , , ,1, , , ,1, , , , , , ,0, , , ,0, , , , , , , , , , , , , , , , , , , , , , , , ,};
    \setrow { , , ,0, , , ,0, , , , , , ,1, , , ,1, , , , , , , , , , , , , , , , , , , , , , , , ,};
    \setrow { , , ,1, , ,1,1, , , , , , ,0, , , ,0, , , , , , , , , , , , , , , , , , , , , , , , ,};
    \setrow { , , ,0, ,1,0,0,1, , , , , ,1, , , ,1, , , , , , , , , , , , , , , , , , , , , , , , , ,};
    \setrow {0,1,0,1,0,1,0,1,0,1,0,1,0,1,0,1, , ,0, , , , , , , , , , , , , , , , , , , , ,};
    \setrow { , , ,0, , ,1, , , , , , , , , , , ,1, , , , , };
    \setrow { , ,1,1, , , , , , , , , , , , , , ,0, , , , , , , , , , ,};
    \setrow { ,1,0,0,1, , , , , , , , , , , , , ,1, , , , , , , , , , ,};
    \setrow {0,1,0,1,0,1,0,1,0,1,0,1,0,1,0,1,0,1,0,1};
    \setrow { , ,1, , , , , ,};

\end{tikzpicture} 
}
\caption{Example of an embedding of the circuit with two input nodes $x,y$ computing $(x \lor y)\land y$. The first gate has incoming edges from $x$ and $y$, while the second gate receives as input the output of the first gate and the node $y$. Since the circuit contains two gates, the construction consists of two tiles connected horizontally, with the second tile placed above the first one; both tiles have the same size. In the lower-right part of each tile, space is reserved for four parallel wires. The gadgets responsible for computing the logical operations and signal duplicators are highlighted in blue. The cells in state $0$ located outside the second gadget represent the output wires of each node. The one marked in red corresponds to the output of the output node, and is the target cell in the prediction problem.}
\label{fig:tilesresadejemplo}
\end{figure}

In the previous section, we studied the fungal automata induced by the one-dimensional majority rule ($f_1$) and showed that its prediction problem (\textsc{FA-pred}) lies in \textbf{NL}. In this section, we continue the study of FA derived from the \emph{freezing majority} rule. 

Our main motivation is to understand how the complexity of the prediction problem changes when such a rule is composed in two dimensions via the fungal construction. In particular, we show that when the underlying one-dimensional freezing majority rule has radius $1.5$, the associated fungal automaton has a \textbf{P}-complete prediction problem. This stands in sharp contrast with the one-dimensional case, where the prediction problem remains in \textbf{NL}.

In this section, we study a one-dimensional freezing totalistic automata with radius greater than one for which the fungal automata prediction problem becomes \textbf{P}-complete. We use the term \emph{becomes} because it is known \cite{goles2021freezing} that the prediction problem for all one-dimensional freezing CA lies in \textbf{NL}.

In particular, given the automata $\mathcal{M} = (\{0,1\},\{-2,-1,0,1\},f)$, (the definition of CA is recalled in \Cref{sec:ca}) where $f$ is the one-dimensional mayority freezing rule, that is:
\begin{equation*}
f(x_1,x_2,x_3,x_4) =
\begin{cases}
    1 & \text{ if } x_3 = 1, \\
    1 & \text{ if } x_1+x_2+x_4 > 2,\\
    0 & \text{ otherwise}.
\end{cases}
\end{equation*}

We study the FA associated to $\mathcal{M}, $$\mathcal{F} = (f, \{0,1\}, M, HV, f_H, f_V)$ (See \Cref{fig:explregla}) for which we have:
\begin{theorem}\label{teo2}
    For the rule $f$ defined above, the FA-pred is $\textbf{P}-$Complete for background state 0.
\end{theorem}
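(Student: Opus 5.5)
The proof has two parts: membership in $\mathbf{P}$ and $\mathbf{P}$-hardness by a log-space reduction from MCVP.

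\emph{Membership.} Since the background state is $0$ and $f$ is freezing, every cell outside the rectangle $R$ stays in state $0$. To see this, take a cell outside $R$ and look at its horizontal (or vertical) neighbourhood $\{-2,-1,1\}$. Becoming $1$ requires all three of those neighbours to be $1$. A cell at distance at least one from $R$ has at least one of these neighbours outside $R$, and a straightforward induction on time shows that this neighbour is still $0$. Hence the dynamics is confined to $R$. Each step either changes at least one cell of $R$ from $0$ to $1$ or reaches a fixed point. A fixed point here means that both $F_H$ and $F_V$ leave the configuration unchanged, which we detect by checking two consecutive steps. So after at most $2|R|+2$ steps of $F_{HV}$ the configuration is stable. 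Simulating these steps takes polynomial time, and we then read off the target cell $x$.

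\emph{Hardness.} We reduce from MCVP, which is $\mathbf{P}$-complete, using the gadgets of \Cref{fig:gadgetdresad}. First we verify each gadget by a local case analysis.
\begin{itemize}
\item The \emph{wire} is an alternating pattern $0101\cdots$ on a row in a sea of $0$s. A signal $1$ entering on the left fills the $0$s one by one (\Cref{fig:wireresad}). If no signal arrives, the wire is stable, because each $0$ has only two $1$s among its three neighbours in direction $\{-2,-1,1\}$. The asymmetric neighbourhood is what makes propagation one-directional: left to right horizontally and bottom to top vertically.
\item The \emph{turns} and the \emph{coordinator} (parity changer) work the same way.
\item The \emph{AND} gadget has a cell that needs both incoming signals to complete a triple of $1$s.
\item The \emph{OR} gadget has a cell fed by either a horizontal or a vertical wire.
\item The \emph{duplicator} and the \emph{crossing} complete the set. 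The crossing is only valid for wires whose distance has even parity (\Cref{fig:pariti}), so coordinators are inserted where needed.
\end{itemize}
For each gadget I would check two things: the correct monotone input/output behaviour, and that no spurious $1$ appears. The second point matters because interference between horizontal and vertical updates is the real danger in the fungal setting.

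Next we lay out the circuit as in \Cref{fig:tilesresad}. Gate $G_i$ gets a tile $T_i$ of side $O(m)$. The tile carries $n+i-1$ parallel wires, holding the inputs and all previous gate outputs. It duplicates the two wires named in the tuple $(g,t,g_1,g_2)$, routes the copies through crossings to the gadget for $t$, and emits the result as a new wire. Tiles are stacked diagonally upward and to the right, because signals cannot travel downward. Inputs $x_j=1$ are encoded by placing a $1$ at the entrance of the corresponding wire; inputs $x_j=0$ leave the entrance as $0$. The target cell is the first $0$ on the output wire of the last tile. Because each tile has a fixed template and fixed offsets, any cell of the configuration can be computed from its coordinates and the gate tuple with $O(\log m)$ bits of counters. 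The transducer is therefore log-space, following the encoding of \cite{ruzzo1981uniform}.

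\emph{Main obstacle.} The hardest step is proving correctness of the composition: gadgets placed at the specified spacing must not interact, and a $0$-signal must never produce a $1$ anywhere. My plan is to show that every $0$ cell of the construction outside wire paths is never $1$-supported. Concretely, in each of its two directional neighbourhoods at least one cell is permanently $0$ (a background cell). This gives an invariant forming an alliance-like region of $0$s, as in \Cref{characaliance}. Then, by induction on circuit depth, the output wire of $G_i$ fills exactly when $G_i$ evaluates to $1$. Timing does not matter, since the dynamics is freezing and monotone. Finally, the parity bookkeeping has to hold globally: every wire entering a crossing must have the right parity, which forces a coordinator at fixed positions in each tile.
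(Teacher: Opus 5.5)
Your proposal follows the paper's proof essentially step for step: membership by simulating the freezing dynamics for linearly many steps, and hardness by a log-space reduction from MCVP with the same wire, turn, coordinator, AND, OR, duplicator and crossing gadgets, even spacing of parallel wires, and tiles stacked with increasing height. Your confinement argument and alliance-style non-interference plan add detail the paper leaves implicit, but the invariant you state is too strong for the tight spacing in the paper's figures (parallel wires $4$ apart): in the left panel of \Cref{fig:pariti}, the background cell directly left of the upward branch and directly below the upper wire has vertical neighbourhood made of the coordinator's $0$ cell, a $1$, and a $0$ of the upper wire, so it turns to $1$ (harmlessly) when both signals are $1$; either space the wires farther apart or weaken the invariant to ``no spurious $1$ ever completes a triple for a wire cell''.
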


\begin{figure}[t]
\centering
\resizebox{0.2\textwidth}{!}{%
\begin{tikzpicture}[
  every node/.append style={font=\relsize{+3}}
]

\draw[draw opacity=0,
      fill={rgb,255:red,184; green,233; blue,134},
      fill opacity=0.43]
      (0,2) rectangle (4,3);

\draw[draw opacity=0,
      fill={rgb,255:red,184; green,233; blue,134},
      fill opacity=0.43]
      (2,0) rectangle (3,4);
      
    \draw[line width=0.1mm] (0, 0) grid (5, 5);
    
    \setcounter{cols}{0}
    
    \setcounter{rows}{5}
    \setcounter{row}{0}
    \setrow { , , , , }
    \setrow { , , , , }
    \setrow { , ,$x$, , }
    \setrow { , , , , }
    \setrow { , , , , };


\end{tikzpicture}
}
\caption{Neighborhood $M$ for the FA $\mathcal{F}$. For a cell $x$ at coordinate $(0,0)$, the neighborhood $M$ is highlighted in green.}
\label{fig:vecindadmay15}
\end{figure}
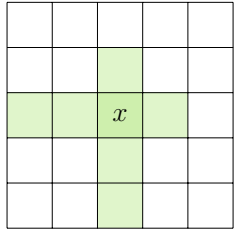

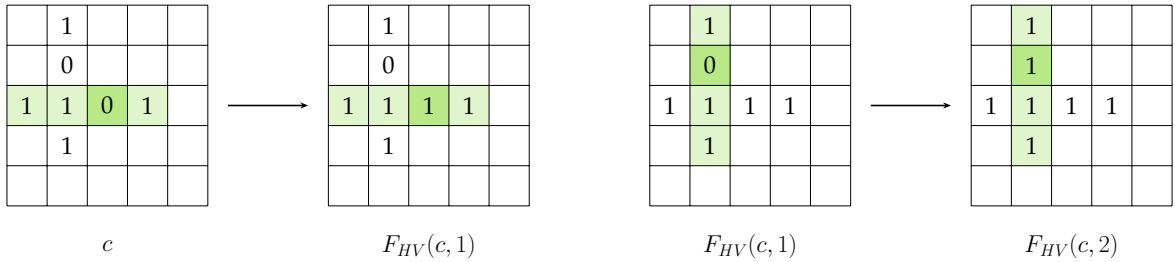
\begin{figure}[t]
\centering
\resizebox{\textwidth}{!}{%
\begin{tikzpicture}[
  every node/.append style={font=\relsize{+3}}
]

\draw[draw opacity=0,
      fill={rgb,255:red,184; green,233; blue,134},
      fill opacity=0.43]
      (0,2) rectangle (4,3);

\draw[draw opacity=0,
      fill={rgb,255:red,184; green,233; blue,134},
      fill opacity=1]
      (2,2) rectangle (3,3);

\draw[draw opacity=0,
      fill={rgb,255:red,184; green,233; blue,134},
      fill opacity=0.43]
      (8,2) rectangle (12,3);

\draw[draw opacity=0,
      fill={rgb,255:red,184; green,233; blue,134},
      fill opacity=1]
      (10,2) rectangle (11,3);
\draw[draw opacity=0,
      fill={rgb,255:red,184; green,233; blue,134},
      fill opacity=0.43]
      (17,1) rectangle (18,5);

\draw[draw opacity=0,
      fill={rgb,255:red,184; green,233; blue,134},
      fill opacity=1]
      (17,3) rectangle (18,4);

\draw[draw opacity=0,
      fill={rgb,255:red,184; green,233; blue,134},
      fill opacity=0.43]
      (25,1) rectangle (26,5);

\draw[draw opacity=0,
      fill={rgb,255:red,184; green,233; blue,134},
      fill opacity=1]
      (25,3) rectangle (26,4);

\draw[-{Stealth[length=2mm]}, line width=0.4mm]
      (5.5,2.5) -- (7.5,2.5);

\draw[-{Stealth[length=2mm]}, line width=0.4mm]
      (21.5,2.5) -- (23.5,2.5);

    \draw[line width=0.1mm] (0, 0) grid (5, 5);

    \setcounter{rows}{5}
    \setcounter{row}{0}
    \setcounter{cols}{0}
    \setrow { ,1, , , }
    \setrow { ,0, , , }
    \setrow {1,1,0,1, }
    \setrow { ,1, , , }
    \setrow { , , , , };
    
    \draw[line width=0.1mm] (8, 0) grid (13, 5);

    \setcounter{cols}{8}
    \setcounter{rows}{5}
    \setcounter{row}{0}
    \setcounter{rows}{5}
    \setcounter{row}{0}
    \setrow { ,1, , , }
    \setrow { ,0, , , }
    \setrow {1,1,1,1, }
    \setrow { ,1, , , }
    \setrow { , , , , };
    
    \draw[line width=0.1mm] (16, 0) grid (21, 5);

    \setcounter{cols}{16}
    \setcounter{rows}{5}
    \setcounter{row}{0}
    \setcounter{rows}{5}
    \setcounter{row}{0}
    \setrow { ,1, , , }
    \setrow { ,0, , , }
    \setrow {1,1,1,1, }
    \setrow { ,1, , , }
    \setrow { , , , , };

    \draw[line width=0.1mm] (24, 0) grid (29, 5);

    \setcounter{cols}{24}
    \setcounter{rows}{5}
    \setcounter{row}{0}
    \setcounter{rows}{5}
    \setcounter{row}{0}
    \setrow { ,1, , , }
    \setrow { ,1, , , }
    \setrow {1,1,1,1, }
    \setrow { ,1, , , }
    \setrow { , , , , };

\setcounter{cols}{0}

\node at (2.5,-1) {$c$};
\node at (10.5,-1) {$F_{HV}(c,1)$};
\node at (18.5,-1) {$F_{HV}(c,1)$};
\node at (26.5,-1) {$F_{HV}(c,2)$};

\end{tikzpicture}
}
\caption{Updates of the FA $\mathcal{F}$. On the left, the initial configuration $c$ is shown, where unlabeled cells are assumed to have value $0$. We note that $F_{HV}(c,1) = f_H(c)$ and $F_{HV}(c,2) = f_V\big(F_{HV}(c,1)\big)$. This means that the first update is performed with respect to the horizontal neighborhood, while the second update is performed with respect to the vertical neighborhood. The remaining grids represent updates of $c$ according to the global function $F_{HV}$. In the center of the grids from the left, the cell highlighted in dark green changed according to the local rule $f$ using its horizontal neighborhood (cells highlighted in light green). On the right, the dark green cell is updated according to $f$ with respect to its vertical neighbors (highlighted in light green). }
\label{fig:explregla}
\end{figure}

We prove the case were the background state has value 0, as the case with background state at value 1 uses the same construction but with a surrounding of cells with value 0 for the rectangle. That the prediction problem lies in \textbf{P} is immediate, since freezing
dynamics converge in a linear number of steps.
\noindent
\begin{proof}
The reduction is from \textsc{MCVP}. Given a circuit $C$ with $m$ gates and $n$ input gates, described by tuples $(g, t, g_1, g_2)$ as in \Cref{sec:CVP} ($g$ is the index of the gate, $t$ specifies the type of the gate, and $g_1$ and $g_2$ are the indices of the gates providing the inputs to gate $g$), we embed the Boolean circuit using a collection of gadgets, including \emph{wires} (See \Cref{fig:wireresad}), \emph{turns}, \emph{AND} gates, \emph{OR} gates, \emph{coordinator wires}  (used to change the parity of a wire, see \Cref{fig:paritireal}), \emph{duplicators}, and \emph{crossings}. All these gadgets are depicted in \Cref{fig:gadgetdresad} and are embedded in the grid. When a wire has at least one of its cells originally in state $0$ taking value $1$, we say that the wire carries a bit $1$; otherwise, it carries a bit $0$.

The circuit here is not delay-sensitive: all gadgets function correctly regardless of the arrival times of the signals. With this observation in mind, we follow and overall embedding strategy with Tiles arranged sequentially, where each tile computes the output of a logical gate. Some aspects to take in consideration with wires when we embeed the circuit on a grid with this gadgets is:
\begin{itemize}

\item Let a grid configuration using the structures defined in \Cref{fig:gadgetdresad} be given. We say that the wires are \emph{correctly embedded} if all cells with initial value $1$ along each wire have the same parity in the sum of their coordinates. More precisely, suppose a wire connects the points $(0,0)$ and $(0,10)$, and the cell at $(0,0)$ has value $0$. By the definition of the wires, the cell at $(0,1)$ has value $1$, the cell at $(0,2)$ has value $0$, the cell at $(0,3)$ has value $1$ and so on. In this wire, every cell initially in state $1$ satisfies that the sum of its coordinates is odd (e.g., $0+1=1$, $0+3=3$, etc.). This parity condition must be satisfied by all wires in order for the configuration to be correctly embedded.

\item In a correctly embedded circuit, parallel wires must be placed at an even distance from each other in order to allow crossings after a duplicator, a key feature required for the general tile-based circuit construction described below. The reason for this parity constraint is illustrated in \Cref{fig:paritireal}. To shift the height of a wire by one unit, we use the \emph{coordinator wire} described in \Cref{fig:gadgetdresad}.

\end{itemize}

The embedding of the circuit is carried out using \emph{tiles}, where each tile computes the value of a gate $g_i$, and the tiles are arranged sequentially, one after another. Wires transmit signals only horizontally to the right and vertically upwards. Consequently, the placement of tiles follows the layout shown in \Cref{fig:tilesresad}. Each tile has fixed size $O(m)$, where $m$ is the number of nodes in the circuit. The reduction is computable in logarithmic space and proceeds tile by tile. The target cell for the prediction problem is chosen to be the output cell of the final gadget.
\end{proof}


\textbf{Credits:} This work was supported by ANID, Fondecyt Regular 1250984. Also by Project HORIZON-MSCA-2022-SE-01 project 101131549 ``Application-driven Challenges for Automata Networks and Complex Systems (ACANCOS)’', Project ANR ``Ordinal Time Computations'' (OTC) ANR-24-CE48-0335-01 and ALARICE ANR-24-CE48-7504.

\end{document}